\documentclass[12pt]{article}


\usepackage[T1]{fontenc}
\usepackage{lmodern, microtype}
\linespread{1.5}
\usepackage[left=1.0in, right=1.0in, top=1.3in, bottom=1.3in]{geometry}
\usepackage[small]{titlesec}
\usepackage{mathtools}

\usepackage{amssymb}
\usepackage{amsmath}
\usepackage{amsthm}
\usepackage{fancyhdr}
\usepackage{bbm}
\usepackage{xcolor}
\usepackage{hyperref}
\hypersetup{
  colorlinks=true,
  linkcolor=blue,
  citecolor=red
}
\usepackage{natbib}
\bibliographystyle{abbrvnat}
\setcitestyle{authoryear,open={(},close={)}}
\usepackage{asymptote}
\usepackage{subfig}

\hypersetup{
  colorlinks=true,
  linkcolor=blue,
  citecolor=red
}

\newtheorem{theorem}{Theorem}
\newtheorem{definition}{Definition}
\newtheorem*{theorem*}{Theorem}
\newtheorem{lemma}{Lemma}

\newtheorem{corollary}{Corollary}

\newtheorem*{lemma*}{Lemma}

\newcommand{\laurent}[1]{{\textcolor{blue}{{#1} ---Laurent}}}

\newenvironment{customprop}[1]
  {\innercustomprop}
  {\endinnercustomprop}

\def\maj{\text{m}}





\def\Aut{\mathrm{Aut}}

\title{Equitable Voting Rules}
\author{Laurent Bartholdi\footnote{Institute of Advanced Studies, Lyon, e-mail: laurent.bartholdi@gmail.com}, Wade Hann-Caruthers\footnote{California Institute of Technology, e-mail: whanncar@gmail.com}, Maya Josyula\footnote{California Institute of Technology, e-mail: mjosyula@caltech.edu}, \\ Omer Tamuz\footnote{California Institute of Technology, e-mail: tamuz@caltech.edu}, and Leeat Yariv\footnote{Princeton University, e-mail: lyariv@princeton.edu} \footnote{We thank Wolfgang Pesendorfer for useful comments. Tamuz gratefully acknowledges financial support from the Simons Foundation, through grant
419427. Yariv gratefully acknowledges financial support from the NSF, through grant SES-1629613. }}

\begin{document}
\maketitle
\begin{abstract}
May's Theorem \citeyearpar{may1952set}, a celebrated result in social choice, provides the foundation for majority rule. May's crucial assumption of symmetry, often thought of as a procedural equity requirement, is violated by many choice procedures that grant voters identical roles. We show that a weakening of May's symmetry assumption allows for a far richer set of rules that still treat voters equally. We show that such rules can  have minimal winning coalitions comprising a vanishing fraction of the population, but not less than the square root of the population size. Methodologically, we introduce techniques from group theory and illustrate their usefulness for the analysis of social choice questions.
\end{abstract}

\vspace{5mm}

\section{Introduction}


Literally translated to ``power of the people'', democracy is commonly associated with two fundamental tenets: equity among individuals and responsiveness to their choices. May's celebrated theorem provides foundation for voting systems satisfying these two restrictions \citep{may1952set}. Focusing on two-candidate elections, May illustrated that majority rule is unique among voting rules that treat candidates identically and guarantee symmetry and responsiveness.  

Extensions of May's original results are bountiful.\footnote{See, e.g., \cite{cantillon2002}, \cite{fey2004}, \cite{goodwin2006}, and references therein.} However, what we view as a procedural equity restriction in his original treatment---often termed anonymity or symmetry---has remained largely unquestioned.\footnote{An exception is \cite{packel1980transitive}, who relaxes the symmetry restriction and adds two additional restrictions to generate a different characterization of majority rule than May's.} This restriction requires that no two individuals can affect the collective outcome by swapping their votes. Motivated by various real-world voting systems, this paper focuses on a particular weakening of this restriction. While still capturing the idea that no voter carries a special role, our equity notion allows for a large spectrum of voting rules, some of which are used in practice, and some of which we introduce. We analyze winning coalitions of such equitable voting rules and show that they can comprise a vanishing fraction of the population, but not less than the square root of its size. Methodologically, we demonstrate how techniques from group theory can be useful for the analysis of fundamental questions in social choice. 

To illustrate our motivation, consider the stylized example of a {\em representative democracy} rule: $m$ {\em counties} each have $k$ residents. Each county selects, using majority rule, one of two representatives. Then, again using majority rule, the $m$ representatives select one of two policies (see Figure~\ref{fig:college3} for the case $m=k=3$). 


This rule does not satisfy May's original symmetry restriction: individuals could swap their votes and change the outcome. In Figure~\ref{fig:college3}, for example, suppose voters $\{1, 2, 3, 4, 5\}$ vote for representatives supporting policy A, while voters $\{6, 7, 8, 9\}$ vote for representatives supporting policy B. With the original votes, policy A would win; but swapping voters 5 and 9 would cause policy B to win.

\begin{figure}
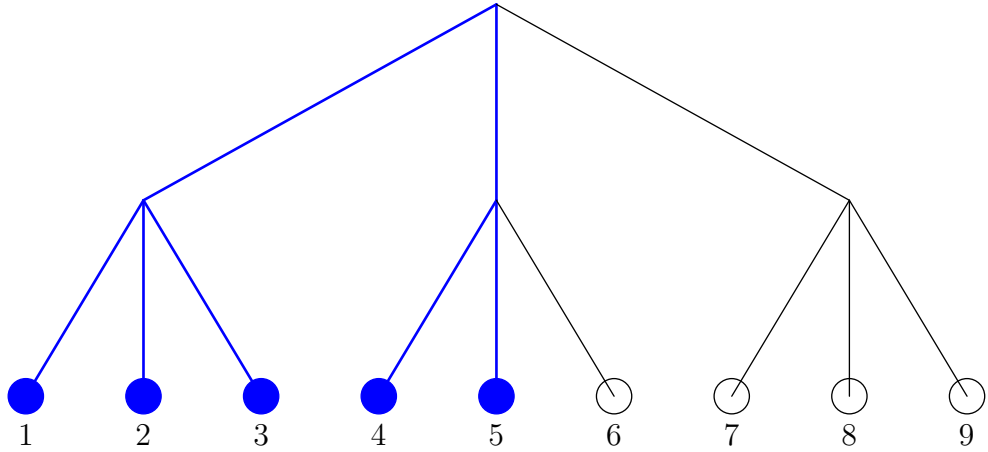
  
\begin{center}
\begin{asy}
size(13cm, 0); 
real yscale = 5.0; 
real circlesize = 0.45; 

pair ctest = (13, 5); 
real dist = 9.0; 

pair left = ctest - (dist, yscale);    	   
pair right = ctest + (dist, -yscale);     	
pair newc = ctest + (0, -yscale);

draw(ctest -- left, blue+linewidth(1)); 
// draw(ctest -- left);
draw(ctest -- newc, blue+linewidth(1));
// draw(ctest -- newc); 
// draw(ctest -- right, blue+linewidth(1));
draw(ctest -- right); 

pair l_left = left - (dist/3, yscale);
pair l_right = left + (dist/3, -yscale);
pair r_left = right - (dist/3, yscale);
pair r_right = right + (dist/3, -yscale);

draw(circle(l_left, circlesize), blue); 
draw(circle(l_right, circlesize), blue);
fill(circle(l_left, circlesize), blue); 
fill(circle(l_right, circlesize), blue);
draw(left -- l_left, blue+linewidth(1)); 
draw(left -- l_right, blue+linewidth(1)); 
// draw(circle(l_left, circlesize)); 
// draw(left -- l_left); 
// draw(circle(l_right, circlesize)); 
// draw(left -- l_right); 
label("1", l_left, 3S);
label("3", l_right, 3S); 

pair l_center = left + (0, -yscale); 
// draw(circle(l_center, circlesize)); 
draw(circle(l_center, circlesize), blue);
fill(circle(l_center, circlesize), blue);
// draw(left -- l_center); 
draw(left -- l_center, blue+linewidth(1));
label("2", l_center, 3S);

// draw(circle(r_left, circlesize), blue); 
// draw(circle(r_right, circlesize), blue);
// fill(circle(r_left, circlesize), blue); 
// fill(circle(r_right, circlesize), blue);
// draw(right -- r_left, blue+linewidth(1)); 
// draw(right -- r_right, blue+linewidth(1)); 
draw(circle(r_left, circlesize)); 
draw(right -- r_left); 
draw(circle(r_right, circlesize)); 
draw(right -- r_right); 
label("7", r_left, 3S);
label("9", r_right, 3S); 

pair r_center = right + (0, -yscale); 
draw(circle(r_center, circlesize)); 
draw(right -- r_center); 
label("8", r_center, 3S);

pair c_left = newc - (dist/3, yscale);
// draw(circle(c_left, circlesize));   
draw(circle(c_left, circlesize), blue);
fill(circle(c_left, circlesize), blue);
pair c_right = newc + (dist/3, -yscale); 
draw(circle(c_right, circlesize)); 
pair c_center = newc + (0, -yscale); 
// draw(circle(c_center, circlesize)); 
draw(circle(c_center, circlesize), blue);
fill(circle(c_center, circlesize), blue);
// draw(newc -- c_left); 
draw(newc -- c_left, blue+linewidth(1)); 
// draw(newc -- c_center); 
draw(newc -- c_center, blue+linewidth(1));
draw(newc -- c_right); 
label("4", c_left, 3S); 
label("5", c_center, 3S); 
label("6", c_right, 3S); 

\end{asy}
\end{center}
\caption{A representative democracy voting rule. Voters are grouped into three counties: $\{1,2,3\}$, $\{4,5,6\}$ and $\{7,8,9\}$. Each county elects a representative by majority rule, and the election is decided by majority rule of the representatives.}\label{fig:college3}
\end{figure}

Even though representative democracy rules do not satisfy May's symmetry assumption, there certainly is an intuitive sense in which their fundamental characteristics ``appear'' equitable. Indeed, variations of these rules were chosen in good faith by many designers of modern democracies. Such rules are currently in use in France, India, the United Kingdom, and the United States, among others. 

Is there a formal sense in which a representative democracy rule is more equitable than a dictatorship? More generally, what makes a voting rule equitable? We suggest the following definition. In an {\em equitable voting rule}, for any two voters $v$ and $w$, there is some permutation of the full set of voters such that: (i) the permutation sends $v$ to $w$, and (ii) applying this permutation to any voting profile leaves the election result unchanged. 

In \S\ref{sec:roles}, we formalize the notion of ``roles'' in a voting body. For instance, in university committees there are often two distinct roles: a chair and a standard committee member; likewise, in juries, there is often a foreperson, who carries a special role, and several jury members, who all have the same role. We show that our notion of equity is tantamount to \textit{all} agents in the electorate having the same role.


Under our equity definition, representative democracy rules are indeed equitable, but dictatorships are not. For instance, in the case depicted in Figure~\ref{fig:college3}, voters $1$ and $2$ play the same role, since the permutation that swaps them leaves any election result unchanged. But $1$ and $4$ also play the same role: the permutation that swaps the first \textit{county} with the second \textit{county} also leaves outcomes unchanged.

There is a large variety of equitable rules that are not representative democracy rules. An example is what we call {\em Cross Committee Consensus (CCC)} rules. In these, each voter is assigned to two committees: a ``row committee'' and a ``column committee'' (see Figure~\ref{fig:ccc}). If any row committee and any column committee both exhibit consensus, then their choice is adopted. Otherwise, majority rule is followed. For instance, suppose a university is divided into equally-sized departments, and each faculty member sits on one university-wide committee. CCC corresponds to a policy being accepted if there is a strong unanimous lobby from a department and from a university-wide committee, with majority rule governing decisions otherwise. This rule is equitable since each voter is a member of precisely one committee of each type, and all row (column) committees are interchangeable.

\begin{figure}
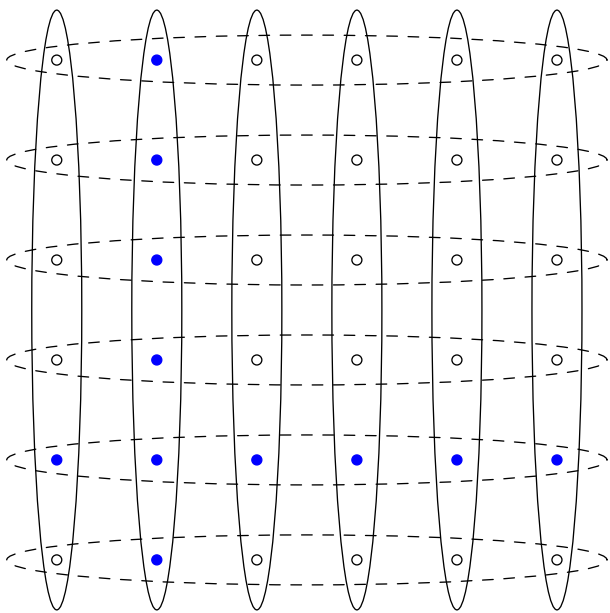

\begin{center}
\begin{asy}
size(8cm,0);
int n = 6;
real center = (n - 1) / 2; 
real width = n / 2; 
real circlesize = 0.05;  

for (int i = 0; i < n ; ++i)
{
	for (int j = 0; j < n ; ++j)
    {
    	if (i == 0)
        { 
        	draw(ellipse((center, j), width, 0.25), black+dashed);
        }
    	if (i == 1 || j == 1)
        {
        	draw(circle((i, j), circlesize), blue); 
        	fill(circle((i, j), circlesize), blue);
        } else {
    		draw(circle((i, j), circlesize));
        }
    }
    draw(ellipse((i, center), 0.25, width)); 
}
\end{asy}
\end{center}
\caption{Cross-committee consensus voting rule. The union of a row and a column is a winning coalition.}\label{fig:ccc}
\end{figure}

\vspace{3mm}

We provide a number of further examples of equitable voting rules, showing the richness of this class and its versatility in allowing different segments of  society---counties, university departments, etc.---to express their preferences. 


In order to characterize more generally the class of equitable voting rules, we focus on their {\em winning coalitions}, the sets of voters that decide the election when in agreement \citep{reiker1962}. In majority rule, all winning coalitions include at least half of the population. We analyze how small winning coalitions can be in equitable voting rules. 

Under representative democracy, winning coalitions have to comprise at least a quarter of the population.\footnote{A winning coalition under representative democracy must include support from half the counties, which translates to half of the population in those counties, or one quarter of the entire population.} Much smaller winning coalitions are possible in what we call {\em generalized representative democracy (GRD)} voting rules, where voters are hierarchically divided into sets that are, in turn, divided into subsets, and so on. For each set, the outcome is given by majority rule over the decisions of the subsets.\footnote{These rules have been studied under the name {\em recursive majority} in the probability literature \citep[see, e.g., ][]{mossel1998recursive}.}
We show that equitable GRD rules for $n$ voters can have winning coalitions as small as $n^{\log_3 2}$, or about $n^{0.63}$, which is a vanishingly small fraction of the population.\footnote{For an example of a non-equitable GRD with a small winning coalition, consider voters $\{1,\ldots,1000\}$ and assume three counties divide the population into three sets of voters: $\{\{1\},\{2\},\{3,\ldots,1000\}$. Then $\{1,2\}$ is a winning coalition. The value $\log_3 2 \approx 0.63$ is the Hausdorff dimension of the Cantor set. As it turns out, there is a connection between equitable GRD's that achieve the smallest winning coalitions and the Cantor set.}


When the number of voters $n$ is a perfect square, and when committee sizes are taken to be $\sqrt{n}$, the CCC rule has a winning coalition of size $2\sqrt{n}-1$. This is significantly smaller than $n^{\log_3 2}$, for $n$ large enough.

Our main result is that, for any $n$, there always exist simple equitable voting rules that have winning coalitions of size $\approx 2\sqrt{n}$. Conversely, we show that no equitable voting rule can have winning coalitions of size less than $\sqrt{n}$. Methodologically, the proof utilizes techniques from group theory and suggests the potential usefulness of such tools for the analysis of collective choice.


While $\sqrt{n}$ accounts for a vanishing fraction of the voter population, we stress that it can be viewed as ``large'' in many contexts. While in a department of $100$ faculty, $10$ members would need to coordinate to sway a decision one way or the other, in a presidential election with, say, $140$ million voters, coordination between nearly $12,000$ voters would be necessary to impact outcomes.\footnote{Interestingly, rules that give decisive power to minorities of size $\sqrt{n}$ appear in other contexts of collective choice and have been proposed for apportioning representation in the United Nations Parliamentary Assembly, and for voting in the Council of the European Union, see \cite{zyczkowski2014square}. These proposed rules relied on the Penrose Method \citep{penrose1946}, which suggests the vote weight of any representative should be the square root of the size of the population she represents, when majority rule governs decisions. Penrose argued that this rule assures equal voting powers among individuals.} 

For instance, even under majority rule, if each voter is equally likely to vote for either of two alternatives, the Central Limit Theorem suggests that a coalition of order $\sqrt{n}$ can control the vote with high probability.


Certainly, beyond equity, another important aspect of voting rules is their susceptibility to manipulation. For instance, with information on voters' preferences, representative democracy rules are sensitive to gerrymandering \citep{gerrymandering2016}. We view the question of manipulability as distinct from that of equity. It would be interesting to formulate a notion of non-manipulability, independent of equity, and to understand how these notions interact. The breadth of equitable voting rules allows for further consideration of various objectives, such as non-manipulability, when designing institutions.

In the next part of the paper, we explore a stronger notion of equity. We consider {\em $k$-equitable} voting rules in which every coalition of $k$ voters plays the same role. These are increasingly stringent conditions that interpolate between our equity notion, when $k=1$, and May's symmetry, when $k=n$. The analysis of $k$-equitable rules is delicate, due to group- and number-theoretical phenomena. There do exist, for arbitrarily large population sizes $n$, voting rules that are $2$- and $3$-equitable, and have winning coalitions as small as $\sqrt{n}$. However, for ``most'' sufficiently large values of $n$, and for any $k \geq 2$, the {\em only} $k$-equitable, neutral, and responsive voting rule is majority. Thus, while equity across individuals allows for a broad spectrum of voting rules, equity among arbitrary fixed-size coalitions usually places the restrictions May had suggested. While $k$-equity is arguably a strong restriction, it is still far weaker than May's original symmetry requirement. In that respect, our results here provide a strengthening of May's conclusions.


\section{The Model}
\label{sec:model}
\subsection{Voting rules}
Let $V$ be a finite set of voters. We denote $V=\{1,\ldots,n\}$ so that $n$ is the number of voters. Each voter has preferences over alternatives in the set $Y = \{-1, 1\}$. We identify the possible preferences over $Y$ with elements of $X = \{ -1, 0, 1 \}$, where $-1$ represents a strict preference for $-1$ over $1$, $1$ represents a strict preference for $1$ over $-1$, and $0$ represents indifference between $-1$ and $1$. We denote by $\Phi = X^V$ the set of voting profiles; that is, $\Phi$ is the set of all functions from the set of voters $V$ to the set of possible preferences $X$. A \textit{voting rule} is a function $f \colon \Phi \rightarrow X$. 

\vspace{3mm}

An important example is the \textit{majority} voting rule $\maj\colon \Phi \to X$, which is given by
  \begin{align*}
    \maj(\phi)= 
    \begin{cases}
        1 &\text{if } |\phi^{-1}(1)| > |\phi^{-1}(-1)|\\
        -1 &\text{if } |\phi^{-1}(1)| < |\phi^{-1}(-1)|\\
        0 &\text{otherwise}.
    \end{cases}
  \end{align*}
A vote of $0$ can be interpreted as abstention or indifference. 


\subsection{May's Theorem}

We now define several properties of voting rules.
Following \cite{may1952set}, we say that a voting rule $f$ is \textit{neutral} if $f(-\phi)=-f(\phi)$. Neutrality implies that both alternatives $-1$ and $1$ are treated symmetrically: if each individual flips her vote, the final outcome is also flipped.

Again following \cite{may1952set}, we say that a voting rule $f$ is \textit{positively responsive} if increased support for one alternative makes it more likely to be selected. Formally, $f$ is positively responsive if $f(\phi)=1$ whenever there exists a voting profile $\phi'$ satisfying the following:
\begin{enumerate}
    \item $f(\phi') = 0$ or $1$.
    \item $\phi(v) \geq \phi'(v)$ for all $v \in V$.
    \item $\phi(v_0) > \phi'(v_0)$ for some $v_0 \in V$.
\end{enumerate}
Thus, $f(\phi)\ge f(\phi')$ if $\phi\ge\phi'$ coordinate-wise, and if $f(\phi)=0$ then any change of $\phi$ breaks the tie.


We now turn to the symmetry between voters. Several group-theoretic concepts will prove useful for the description and comparison of May's and our notions. Denote by $S_n$ the set of permutations of the $n$ voters. Any permutation of the voters can be associated with a permutation of the set of voting profiles $\Phi$: given a permutation $\sigma \in S_n$, the associated permutation on the voting profiles maps $\phi$ to $\phi^\sigma$, which is given by $\phi^\sigma(v) = \phi (\sigma^{-1} v)$.  The \textit{automorphism group} of the voting rule $f$ is given by
\begin{align*}
  \Aut_f =  \{ \sigma \in S_n \, | \, \forall \phi \in \Phi, \, f(\phi^\sigma) = f(\phi)  \}.
\end{align*}
That is, $\Aut_f$ is the set of permutations of the voters that leave election results unchanged, \textit{for every} voting profile.

We can interpret a permutation $\sigma$ as a scheme in which each voter $v$, instead of casting her own vote, gets to decide how some {\em other voter} $w=\sigma(v)$ will vote. A permutation $\sigma$ is in $\Aut_f$ if applying this scheme never changes the outcome: when each $w = \sigma(v)$ votes as $v$ would have, the result is the same as when each player $v$ votes for herself.

The automorphism group $\Aut_f$ has natural implications for pivotality, or the Shapley-Shubik and Banzhaf indices of players in simple games, see \cite{dubey1979mathematical}. Consider a setting in which all voters choose their votes identically and independently at random. Given such a distribution, we can consider the probability $\eta_v$ that a voter $v$ is pivotal.\footnote{A voter $v$ is pivotal at a particular voting profile if a change in her vote can affect the outcome under $f$.} It is easy to see that if there is some $\sigma \in \Aut_f$ that maps $v$ to $w$, then $\eta_v=\eta_w$, implying that $v$ and $w$ have the same Banzhaf index. In fact, when there exists $\sigma \in \Aut_f$ that maps $v$ to $w$, any statistic associated with a voter that treats other voters identically---the probability the outcome coincides with voter $v$'s vote, the probability that voter $v$ and another voter are pivotal, etc.---would be the same for voters $v$ and $w$. Hence, in an equitable rule, all the voters' Banzhaf indices will be equal.\footnote{In a recent follow-up paper to this paper, \cite{bhatnagar2020voting} shows that the converse does not hold: there are rules that are not equitable, but for which the same holds.}

\cite{may1952set}'s notion of equity, often termed \textit{symmetry} or \textit{anonymity}, requires that swapping the votes of any two voters will not affect the collective outcome. It can be succinctly stated as $\Aut_f = S_n$.  

\vspace{3mm}

\noindent\textbf{May's Theorem.} \textit{
Majority rule is the unique symmetric, neutral, and positively responsive voting rule.} 

\vspace{3mm}

Perhaps surprisingly, the requirement of symmetry is stronger than what is needed for May's conclusions. As it turns out, a weaker requirement that $\Aut_f$ be restricted to only {\em even permutations} would suffice for his results, see Lemma~\ref{lem:alternating-or-symmetric-then-big-winning-coalitions} in the appendix.\footnote{A permutation $\sigma$ is even if the number of pairs $(v,w)$ such that $v<w$ and $\sigma(v)>\sigma(w)$ is even. Put another way, define a transposition to be a permutation that only switches two elements, leaving the rest unchanged. A permutation is even if it is the composition of an even number of transpositions.} In Theorem~\ref{thm:6-transitive} below we show that, in fact, a far weaker requirement suffices. 

\subsection{Equitable Voting Rules}

As we have already seen, the requirement that $\Aut_f$ includes all permutations, or all even permutations, precludes many examples of voting rules that ``appear'' equitable. What makes a voting rule appear equitable? Our view is that, in an equitable voting rule, ex-ante, all voters carry the same ``role.'' We propose the following definition and discuss in the next section the sense in which it formalizes this view. 

\begin{definition}
A voting rule $f$ is \textit{equitable} if for every $v,w \in V$ there is a $\sigma \in \Aut_f$ such that $\sigma(v)=w$.
\end{definition}

In words, a voting rule is equitable if, for any two voters $v$ and $w$, there is some permutation of the population that relabels $v$ as $w$ such that, \textit{regardless of voters' preferences}, the outcome is unchanged relative to the original voter labeling. 


In group-theoretic terms, $f$ is equitable if and only if the group $\Aut_f$ \textit{acts transitively} on the voters.\footnote{It turns out our equity restriction is effectively the definition of transitivity. The notion of transitive groups is not directly related to transitivity of relations often considered in Economics.} Insights from group theory related to the characteristics of transitive groups are therefore at the heart of our main results. Appendix~\ref{sec:primer} contains a short primer on the basic group theoretical background that is needed for our analysis. \footnote{\cite{isbell1960homogeneous} studied a notion equivalent to our equity notion and considered its implications, combined with rule neutrality, in a setting in which preferences must be strict, so that tie breaking must also be equitable. When $n$ is odd, majority is equitable and neutral. Isbell showed that for some even $n$ such rules exist, while for other even $n$ they do not.}



\subsection{Equity as Role Equivalence}
\label{sec:roles}

May noted the strong link between anonymity and equality, stating that ``This condition might well be termed anonymity... A more usual label is equality'' \citep[page 681]{may1952set}. Are anonymity and equality inherently one and the same? In this section, we formalize this question in terms of {\em roles}. This allows for a natural distinction between May's symmetry or anonymity condition and our equity notion. In particular, we formalize our motivating idea that equity corresponds to all voters carrying the same role.



Even though we defined voting rules with respect to a given set of voters, the design of voting rules is often carried out without a particular group of people in mind; rather, collective institutions are often designed in the abstract. For example, hiring protocols in university departments might be specified prior to any specific search. One such protocol might be that a committee chair decides dictatorially whom to hire, unless indifferent, in which case the committee decides by majority rule. Under such an abstract rule, the dictatorial privilege is not assigned to a particular Prof.\ X, but to an abstract role called ``chair.'' Later, when a committee is formed, the role of chair is assigned to some particular faculty member. The same applies, at least in aspiration, to countries' election rules, jury decision protocols, etc. Indeed, historical cases in which laws were written with particular individuals specified or implied are often not benevolent examples of institution building.

To capture this idea, we introduce abstract voting rules. Recall that we define (non-abstract) voting rules as functions from  $X^V$ to $X$ in the context of a particular voter set $V$. An {\em abstract} voting rule is a map $f \colon X^R \to X$ for a set of {\em roles} $R$. Of course, mathematically, these objects are identical, and so we can speak of abstract voting rules as being anonymous, equitable, etc. In the above example of the committee, the set of roles would be $R=\{C,M_2,\ldots,M_n\}$, where $C$ stands for ``chair'' and $M_i$ is member $i$.

The conceptual difference between voters and roles is that voters have preferences and vote, whereas roles do not. Therefore, for a vote to take place, voters need to be assigned to roles. Accordingly, given a group of voters $V$ equal in size to $R$, we call a bijection $a \colon V \to R$ a {\em role assignment}. Given an abstract voting rule $f \colon X^R \to X$, a role assignment $a$ defines a (non-abstract) voting rule $f_a \colon X^V \to X$ in the obvious way, via $f_a(\phi) = f(\phi \circ a^{-1})$. In our university committee example, if $V = \{\text{Alex},\text{Bailey},\ldots\}$, an assignment $a$ that satisfies $a(\text{Alex})=C$ and $a(\text{Bailey})=M_7$ assigns Alex the role of chair, and Bailey the role of member $7$. Hence, the voting rule $f_a$ is a dictatorship of Alex. A different assignment $b$ with $b(\text{Bailey})=C$ and $b(\text{Alex})=M_7$ results in the voting rule $f_b$ in which Bailey is the dictator. Note that any assignment $c$ that also assigns $c(\text{Bailey})=C$ results in the same voting rule as $b$, even if, say, $c(\text{Alex}) = M_8$: $f_b(\phi) = f_c(\phi)$ for any voting profile $\phi$. In the context of an abstract voting rule $f$, we say that two assignments are equivalent if they lead to the same voting rule: $a$ and $b$ are equivalent under $f$ if $f_a=f_b$.

The next proposition captures a sense in which symmetry is a form of equality:
\begin{customprop}{1A}
\label{clm:anonymous}
An abstract voting rule $f \colon X^R \to X$ is symmetric if and only if all assignments are equivalent under $f$.  
\end{customprop}
Thus, symmetry means that assignments do not matter, or that voters are completely indifferent between assignments: given any voting profile $\phi$ and given $f$, each voter would be indifferent if given a choice between assignments. Voters do not care which role they have; moreover, voters do not care which roles other voters have.

We now turn to the interpretation of our equity notion in terms of roles. In the university committee example, certainly ``chair'' is a distinguished role. Likewise, it is clear that if we view the representative democracy example of Figure~\ref{fig:college3} as an abstract voting rule, no role is distinguished. Of course, once we assign roles to voters with particular preferences, some voters may be disadvantaged, and prefer a different assignment ex-post. In that respect, in the abstract representative democracy rule, all roles are ex-ante identical.

We say that roles $r_1, r_2 \in R$ are equivalent under an abstract voting rule $f \colon X^R \to X$ if, for any voter $v$ and assignment $a$ such that $a(v)=r_1$, there is an assignment $b$ with $b(v)=r_2$ such that $f_a=f_b$. That is, roles $r_1$ and $r_2$ are equivalent if, for any voter, it is impossible to determine whether they are assigned the role of $r_1$ or $r_2$ from the entire mapping from vote profiles to chosen alternatives. In the university committee example, $M_i$ and $M_j$ are equivalent, but $C$ is not equivalent to any other role. If $a(v)=C$, then clearly this can be determined from $f_a$, but not if $a(v) \neq C$; in the latter case one can find an assignment $b$ such that $b(v) \neq a(v)$, but $f_a=f_b$, since it is impossible to tell whether a voter has role $M_i$ or $M_j$. In the representative democracy example of Figure~\ref{fig:college3}, it is impossible to determine from $f_a$ the role of any given voter $v$.

Given this definition of equivalent roles, the next proposition is a sharp characterization of equity.
\begin{customprop}{1B}
  \label{prop:identical}
  An abstract voting rule $f \colon X^R \to X$ is equitable if and only if all roles are equivalent under $f$.
\end{customprop}
Therefore, while symmetry means that voters are indifferent between assignments, equity implies that voters are indifferent between roles. Indeed, Proposition~\ref{prop:identical} implies that, given an abstract voting rule $f$, and given any two roles $r_1,r_2$, if a voter had to choose between (i) any assignment in which she had role $r_1$, or (ii) any assignment in which she had role $r_2$, she would be indifferent. Both would allow her to select the same voting rule $f_a$. Likewise, if the voter had to choose between roles $r_1$ and $r_2$ knowing that an adversary would get to choose the rest of the assignment, she would be indifferent.

The following highlights the idea that equity captures indifference between roles.

\begin{customprop}{2A}
  \label{clm:role-transitive}
  An abstract voting rule $f \colon X^R \to X$ is equitable iff there is a set of assignments $A$ such that
  \begin{enumerate}
    \item $f_a=f_b$ for all $a, b \in A$, and
    \item for each role $r \in R$ and voter $v \in V$ there is an $a \in A$ with $a(v)=r$.
  \end{enumerate}
\end{customprop}
Thus, there is a menu of assignments that all induce the same voting rule, but allow $v$ to choose any role.


\subsection{Winning Coalitions}

One way to describe a voting rule is through its winning coalitions: the sets of individuals whose consensual vote determines the alternative chosen. Formally, we say that a subset $S \subseteq V$ is a \textit{winning coalition} with respect to the voting rule $f$ if, for every voting profile $\phi$ and $x \in \{-1,1\}$, $\phi(v) = x$ for all $v \in S$ implies $f(\phi) = x$.


Note that no two winning coalitions of $f$ can be disjoint. Indeed, suppose that $M, M' \subseteq V$ are two disjoint winning coalitions. We can then have a profile under which members of $M$ vote unanimously for $-1$ and members of $M'$ vote for $1$. In such cases, $f$ would not be well defined. The following lemma illustrates a version of the converse.

\begin{lemma}
\label{lem:make-voting-rule-from-winning-coalitions}
Let $\mathcal{W}$ be a collection of subsets of $V$ such that every pair of subsets in $\mathcal{W}$ has a nonempty intersection. Then there is a neutral, positively responsive voting rule for $V$ for which every set in $\mathcal{W}$ is a winning coalition.
\end{lemma}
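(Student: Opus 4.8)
The plan is to build $f$ explicitly by layering two ``unanimity'' clauses for the sets in $\mathcal{W}$ on top of majority rule, which serves as a tie-breaker. Concretely, I would set
$$
f(\phi) =
\begin{cases}
1 & \text{if } \phi(v)=1 \text{ for all } v\in S \text{ for some } S\in\mathcal{W},\\
-1 & \text{if } \phi(v)=-1 \text{ for all } v\in S \text{ for some } S\in\mathcal{W},\\
\maj(\phi) & \text{otherwise.}
\end{cases}
$$
The first thing to check is that $f$ is well defined, i.e.\ that the first two clauses are mutually exclusive. This is precisely where the pairwise-intersection hypothesis enters: if some $S\in\mathcal{W}$ were unanimously $1$ and some $S'\in\mathcal{W}$ unanimously $-1$ under the same $\phi$, then any $v\in S\cap S'$ (nonempty by assumption) would satisfy both $\phi(v)=1$ and $\phi(v)=-1$, a contradiction. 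Once $f$ is well defined, the defining clauses immediately make each $S\in\mathcal{W}$ a winning coalition: a profile that is unanimously $x\in\{-1,1\}$ on $S$ falls into the corresponding clause and yields $f(\phi)=x$.

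Next I would verify neutrality. Replacing $\phi$ by $-\phi$ swaps the roles of the first two clauses (a set unanimous at $1$ under $\phi$ is unanimous at $-1$ under $-\phi$, and conversely) and leaves the ``otherwise'' case invariant, on which $f=\maj$ is already neutral; hence $f(-\phi)=-f(\phi)$ in all three cases.

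The main work---and the main obstacle---is positive responsiveness. Suppose $\phi\ge\phi'$ coordinate-wise with strict inequality at some $v_0$ and $f(\phi')\in\{0,1\}$; I must deduce $f(\phi)=1$. I would split on how $f(\phi')$ arises. If $f(\phi')=1$ via a unanimous-$1$ set $S$, then $\phi\ge\phi'$ forces $\phi$ to be unanimously $1$ on $S$ as well (since $\phi'(v)=1$ forces $\phi(v)=1$), so $f(\phi)=1$ directly. Otherwise $f(\phi')\in\{0,1\}$ comes from the majority clause, so $\phi'$ lies in the ``otherwise'' case with $\maj(\phi')\ge 0$. The key step here is that $\phi$ cannot trigger the $-1$ clause: if some $S\in\mathcal{W}$ were unanimously $-1$ under $\phi$, then $\phi(v)=-1$ on $S$ would force $\phi'(v)=-1$ on $S$, contradicting that $\phi'$ is in the ``otherwise'' case. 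Consequently $\phi$ either triggers the $1$ clause, giving $f(\phi)=1$, or stays in the ``otherwise'' case, where $f(\phi)=\maj(\phi)=1$ follows from positive responsiveness of majority applied to $\phi\ge\phi'$, $\phi\neq\phi'$, and $\maj(\phi')\ge 0$.

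The only routine ingredient that remains is positive responsiveness of $\maj$ itself, which I would record in passing: since $\phi\ge\phi'$ coordinate-wise with a strict increase at $v_0$, we have $\sum_{v}\phi(v)>\sum_{v}\phi'(v)\ge 0$ whenever $\maj(\phi')\ge 0$, so $\maj(\phi)=1$. Note that the constructed rule need not be anonymous---the unanimity clauses single out the sets in $\mathcal{W}$---but the lemma only requires neutrality and positive responsiveness, both of which the case analysis above establishes.
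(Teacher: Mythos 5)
Your proposal is correct and coincides with the paper's own proof: the same construction (unanimity clauses for sets in $\mathcal{W}$ with majority as the fallback), the same use of pairwise intersection for well-definedness, and the same structure for neutrality and positive responsiveness, where the key step in both is that the larger profile cannot trigger the $-1$ clause. Your case analysis for positive responsiveness is merely a slightly more explicit rendering of the argument the paper gives.
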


Intuitively, the construction underlying Lemma \ref{lem:make-voting-rule-from-winning-coalitions} is as follows. First, for any vote profile in which a subset $W \in \mathcal{W}$ votes for $1$ (or $-1$) in consensus, we specify the voting rule to also take the value of $1$ (or $-1$). For any profile in which no $W \in \mathcal{W}$ votes in consensus, we define the voting rule to follow majority rule. By definition, the winning coalitions of this voting rule contain the sets in $\mathcal{W}$. As we show, it is also neutral and positively responsive.

This lemma will allow us to discuss neutral and positive-responsive equitable voting rules through the restrictions they impose on winning coalitions.

\section{Winning Coalitions for Equitable Voting Rules}\label{sec:coalitions}

In this section, we provide bounds on the size of winning coalitions in general equitable voting rules. We then restrict attention to the special class of equitable voting rules that generalize representative democracy rules and characterize the size of winning coalitions for those.

\subsection{Winning Coalitions of Order $\mathbf{\sqrt{n}}$}

We first show that for any population size, there always exist equitable voting rules that have winning coalitions that are of order $\sqrt{n}$.

\begin{theorem}
\label{thm:transitive-upper}
  For every $n$ there exists a neutral, positively responsive equitable voting rule with winning coalitions of size $2\lceil \sqrt{n} \;\rceil - 1$.
\end{theorem}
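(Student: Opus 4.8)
The plan is to realize a ``cyclic'' version of the Cross Committee Consensus rule and hand its winning coalitions to Lemma~\ref{lem:make-voting-rule-from-winning-coalitions}. First I would identify the voter set with the cyclic group, $V = \mathbb{Z}/n\mathbb{Z}$, and let $G \le S_n$ be the group of translations $x \mapsto x + t$, which acts transitively on $V$. The key structural observation is that it suffices to produce a single subset $B \subseteq \mathbb{Z}/n\mathbb{Z}$ and designate as winning coalitions the whole orbit $\mathcal{W} = \{B + t : t \in \mathbb{Z}/n\mathbb{Z}\}$: since $\mathcal{W}$ is translation-invariant and majority rule is $S_n$-invariant, the rule $f$ furnished by Lemma~\ref{lem:make-voting-rule-from-winning-coalitions} (unanimity of some member of $\mathcal{W}$ forces that outcome, majority otherwise) satisfies $f(\phi^\sigma)=f(\phi)$ for every translation $\sigma$. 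Hence $G \subseteq \Aut_f$, so $\Aut_f$ acts transitively and $f$ is equitable, while neutrality and positive responsiveness come for free from the Lemma.

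This reduces the theorem to a purely additive-combinatorial task: find $B \subseteq \mathbb{Z}/n\mathbb{Z}$ with $|B| = 2m-1$, where $m = \lceil \sqrt{n}\,\rceil$, whose translates pairwise intersect. Since $(B+s)\cap(B+t)\neq\emptyset$ iff $t-s \in B-B$, pairwise intersection of the entire orbit is equivalent to the difference set covering the group, $B-B = \mathbb{Z}/n\mathbb{Z}$. I would take the ``row-and-column'' set $B = \{0,1,\dots,m-1\} \cup \{0,m,2m,\dots,(m-1)m\} \pmod n$, the reduction of a single CCC winning coalition. Writing $A = \{0,\dots,m-1\}$ and $C = \{0,m,\dots,(m-1)m\}$, each has $m$ elements and they share $0$, so $|B| \le 2m-1$; if the reduction collapses elements I would simply pad $B$ up to size exactly $2m-1$, which only enlarges $B-B$ and preserves every other property.

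The crux is verifying $B-B = \mathbb{Z}/n\mathbb{Z}$ for arbitrary $n$, where the perfect-square intuition of CCC must survive wraparound. The clean step is to compute the difference set $C-A$ \emph{before} reducing modulo $n$: as a set of integers it equals $\bigcup_{j=0}^{m-1}[\,jm-(m-1),\,jm\,]$, and consecutive intervals abut (interval $j$ ends at $jm$, interval $j+1$ begins at $jm+1$), so $C-A$ is the contiguous integer range $[-(m-1),\,(m-1)m]$, comprising exactly $m^2$ integers. Because $m=\lceil\sqrt{n}\,\rceil$ forces $m^2 \ge n$, any block of $m^2$ consecutive integers meets every residue class mod $n$; therefore $(C-A)\bmod n = \mathbb{Z}/n\mathbb{Z}$, and a fortiori $B-B = \mathbb{Z}/n\mathbb{Z}$.

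I expect the main obstacle to be exactly this verification for non-square $n$: the naive attempt to represent a target residue as (column) minus (row) stumbles because the reduced positions of $C$ are no longer tidily spaced, and direct max-gap or three-distance bookkeeping becomes messy. The resolution I would emphasize is to avoid reasoning about the reduced positions of $C$ at all and instead argue at the level of integers, where $C-A$ is visibly a single interval of length $m^2 \ge n$, reducing modulo $n$ only at the very end. Assembling the pieces---transitivity of the translation group, equity via $G\subseteq\Aut_f$, neutrality and positive responsiveness from Lemma~\ref{lem:make-voting-rule-from-winning-coalitions}, and the size bound $|B|=2m-1$---then yields the theorem.
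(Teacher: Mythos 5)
Your proof is correct, but it builds a different rule than the paper does, and the work lands in a different place. The paper proves this theorem via the \emph{longest-run} rule: voters sit on a cycle, and if some run (maximal contiguous block of identical votes) is strictly longer than all others, its vote wins, with majority as the fallback. Equity again comes from rotations, and the winning coalition is the same set as your $B$ --- a block $\{0,\dots,m-1\}$ plus sentries spaced $m$ apart --- but the verification is a run-length argument: the sentries cap every disjoint run at length $m-1$, while the block itself is a run of length $m$, so the block's vote prevails. You instead take the translation orbit $\mathcal{W}=\{B+t\}$ as the primitive object, feed it to Lemma~\ref{lem:make-voting-rule-from-winning-coalitions}, and reduce everything to the additive fact $B-B=\mathbb{Z}/n\mathbb{Z}$, proved cleanly by observing that $C-A$ is an interval of $m^2\ge n$ consecutive integers before reduction. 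Both routes are sound; what yours buys is a transparent duality with the lower bound, since the proof of Theorem~\ref{thm:transitive-lower} shows a winning coalition of a cyclic rule \emph{must} have difference set covering $\mathbb{Z}/n\mathbb{Z}$ (forcing $|S|^2\ge n$), and your construction exhibits a set achieving that covering with $|B|\le 2\lceil\sqrt{n}\,\rceil-1$, showing the difference-set obstruction is essentially the only one. What the paper's route buys is a voting rule with a simple, self-contained procedural description (``the longest run wins''), which is part of the point of the theorem --- the authors emphasize the existence of \emph{simple} equitable rules --- whereas your rule is defined by reference to a family of designated coalitions. One shared blemish, not a defect of your argument relative to theirs: for $n=2$ the claimed size $2\lceil\sqrt{n}\,\rceil-1=3$ exceeds $n$, so both proofs (yours via padding, theirs via passing to supersets) silently require $2\lceil\sqrt{n}\,\rceil-1\le n$, which holds for all $n\neq 2$.
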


An important implication of this theorem is that, under an equitable voting rule, winning coalitions can account for a vanishing fraction of the population. Nevertheless, $\sqrt{n}$ is arguably a large number of voters in some contexts. For example, in a majority vote, suppose all voters vote for each of $\{-1,1\}$ independently with probability one half. A manipulator who wants to guarantee an outcome with high probability would need to control an order of $\sqrt{n}$ of the votes. This is a consequence of the fact that the standard deviation of the number of voters who vote $1$ is of order $\sqrt{n}$.

The cross committee consensus rule  described in the introduction is an example of an equitable voting rule in which winning coalitions are $O(\sqrt{n})$. Nevertheless, there is an algebraic subtlety---the construction of that rule relies on $n$ being an integer squared. Certainly, an analogous construction can be made for any $n$ that can be described as $n = k \cdot m$ for some integers $k$ and $m$ by considering some committees to be of size $k$ and others to be of size $m$. Such constructions, however, would not necessarily generate voting rules with winning coalitions of size close to $\sqrt{n}$. We prove Theorem~\ref{thm:transitive-upper} by constructing a simple, related rule that applies to every $n$, called the {\em longest-run} rule.\footnote{We thank Elchanan Mossel for suggesting this improvement to a previous construction.}

\begin{figure}
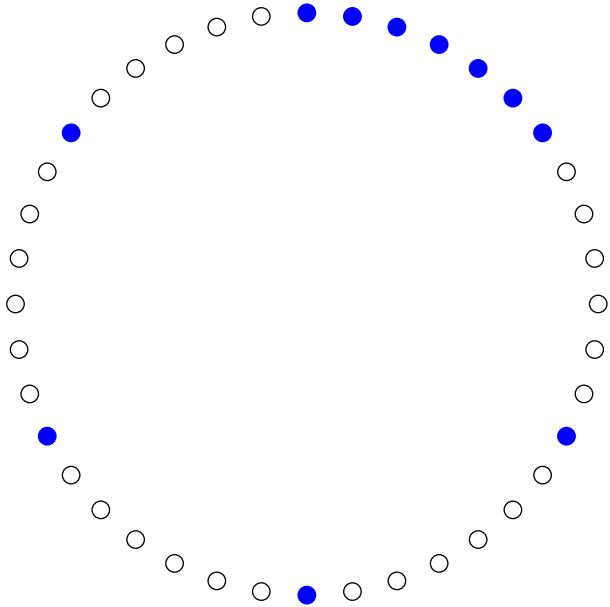

\begin{center}
\begin{asy}
int mod1 (int x, int y)
{
    int result = x - y;
    if (x < y)
    {
        return x; 
    } else {
        while (result >= y)
        {
            result = result - y; 
        }
        return result;
    }
    
}

size(8cm, 0); 
real circlesize = 0.03;

int npoints = 40; 
int m = ceil(sqrt(npoints)); 

real angle = 0; 
real interval = (2*pi)/npoints; 
for (int i = 1; i < npoints + 1;  ++i)
{
    pair loc = (cos(-angle+pi/2), sin(-angle+pi/2)); 
    if (i <= m || mod1(i, m) == 0)
    {
        draw(circle(loc, circlesize), blue); 
        fill(circle(loc, circlesize), blue);
    } else {
        draw(circle(loc, circlesize));
    }
     
    angle = angle + interval; 
}

\end{asy}
\end{center}
\caption{The longest-run voting rule. The set depicted in blue is a winning coalition of size $\approx 2 \sqrt{n}$.}
\label{fig:cyclic}
\end{figure}
Identify the set of voters with $\{0,1,\ldots,n-1\}$, and place them along a cycle, as in Figure~\ref{fig:cyclic}. Given a voting profile $\phi$, a {\em run} is a contiguous block of voters voting identically for either $-1$ or $1$. Formally, a set of voters $W \subseteq V$ is a run if $\phi(w) = \phi(w') \in \{-1,1\}$ for all $w,w' \in W$, and if $W = \{i, i+1,\ldots,i+k\}$ modulo $n$.

Given a voting profile $\phi$, we say that $W$ is the {\em longest-run} if it is a run that is strictly longer than all other runs. The longest-run voting rule $\ell$ is defined as follows: if there is a longest-run in $\phi$, then $\ell(\phi)$ is the vote cast by the members of this run. Otherwise, $\ell(\phi) = \maj(\phi)$, where $\maj$ is the majority rule.

The longest-run rule is equitable, since the group of rotations maps any voter to any voter. Furthermore, it admits winning coalitions of size $\approx 2 \sqrt{n}$: these include a run of length $\sqrt{n}$, together with $\sqrt{n}$ agents spaced $\sqrt{n}$ apart, thus preventing the creation of longer runs. See Figure~\ref{fig:cyclic}.

\medskip

We now offer a counterpart for Theorem~\ref{thm:transitive-upper} that provides a lower bound on the size of minimal coalitions in equitable voting rules.

\begin{theorem}
\label{thm:transitive-lower}
  Every winning coalition of an equitable voting rule has size at least $\sqrt{n}$.
\end{theorem}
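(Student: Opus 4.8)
The plan is to exploit the transitivity of $\Aut_f$ together with the fact that no two winning coalitions can be disjoint. Suppose toward a contradiction that $f$ is equitable and has a winning coalition $S$ with $|S| < \sqrt{n}$. First I would observe that for any $\sigma \in \Aut_f$, the image $\sigma(S)$ is again a winning coalition: if all voters in $\sigma(S)$ agree on a value $x$, then in the permuted profile the original coalition $S$ agrees on $x$, so applying the automorphism invariance of $f$ shows the outcome must be $x$. Thus the orbit $\{\sigma(S) : \sigma \in \Aut_f\}$ is a family of winning coalitions, and by the observation preceding Lemma~\ref{lem:make-voting-rule-from-winning-coalitions}, every pair of these coalitions must intersect.

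The key step is a counting/averaging argument bounding the orbit from below using transitivity. Because $\Aut_f$ acts transitively on the $n$ voters, every voter lies in the same number of coalitions in the orbit; more precisely, if we sum $|\sigma(S)|$ over all $\sigma$ in a suitable set of coset representatives realizing the distinct images, each voter is covered the same expected number of times. The cleanest way to make this precise is to pick any voter $v_0 \in S$; by transitivity, for every voter $w$ there is $\sigma_w \in \Aut_f$ with $\sigma_w(v_0) = w$, so the coalition $\sigma_w(S)$ contains $w$. Hence the orbit of $S$ covers all of $V$, and every voter is contained in at least one member of the orbit.

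The main obstacle is converting ``every pair intersects'' plus ``the orbit covers $V$'' into the bound $|S| \ge \sqrt{n}$, and here a symmetry/double-counting estimate does the work. I would fix $v_0 \in S$ and consider the collection $\mathcal{C} = \{\sigma(S) : \sigma \in \Aut_f\}$. Each element of $\mathcal{C}$ has size $|S|$. The crucial point is that, by transitivity and the intersection property, one can show the whole voter set is covered by a subfamily of $\mathcal{C}$ whose members pairwise share the distinguished structure forced by $v_0$. The quantitative heart is this: consider the bipartite incidence between voters $w \in V$ and the coalitions $\sigma(S)$ passing through a fixed voter. On one hand, transitivity forces each voter to be ``reachable,'' giving $n$ constraints; on the other hand, each winning coalition has only $|S|$ members, and the pairwise intersection property limits how the $n$ voters can be distributed. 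Balancing these---the number of coalitions needed to cover $n$ voters is at least $n/|S|$, while pairwise intersection caps the number of pairwise-disjoint-free coalitions at roughly $|S|$---yields $n/|S| \le |S|$, i.e.\ $|S| \ge \sqrt{n}$. I expect the delicate part to be justifying the $|S|$ upper bound on the relevant subfamily rigorously from the group action rather than from an ad hoc set-system argument; the transitivity of $\Aut_f$ is exactly what guarantees the uniform covering needed to close the inequality.
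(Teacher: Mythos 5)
Your setup matches the paper's: images $\sigma(S)$ of a winning coalition under automorphisms are winning coalitions, any two winning coalitions must intersect, and transitivity makes the orbit of $S$ cover $V$. But the quantitative step---``pairwise intersection caps the number of pairwise-disjoint-free coalitions at roughly $|S|$, hence $n/|S| \le |S|$''---is a genuine gap, and as a set-system statement it is simply false. A family of pairwise-intersecting sets of size $k$ that covers $V$ exists for $k$ as small as $2$: take all $k$-subsets of $V$ containing one fixed voter $v_0$. These pairwise intersect (at $v_0$), they cover $V$, and there are far more than $k$ of them, yet $k$ is nowhere near $\sqrt{n}$. So ``covers $V$'' plus ``pairwise intersecting'' can never yield the bound on its own. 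The cap fails even for genuine group orbits: in the paper's longest-run rule, the orbit of the winning coalition under rotations consists of $n$ distinct pairwise-intersecting sets of size roughly $2\sqrt{n}$, so the number of coalitions in the relevant family is $n$, not $|S|$. You flag this yourself as ``the delicate part,'' but the proposal does not supply the missing argument, and the route you sketch (bounding the size of a covering subfamily) cannot supply it.

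What closes the argument in the paper is a count over \emph{group elements}, not over sets in the orbit. For $v,w \in V$ let $\Gamma_{v,w} = \{g \in \Aut_f : g(v)=w\}$. Each such set is a left coset of the stabilizer of $v$, so by transitivity and the Orbit--Stabilizer Theorem it has size exactly $|\Aut_f|/n$; this uniformity is the rigorous version of your ``uniform covering'' intuition, but it lives in the group, not in the incidence structure on voters. Non-disjointness of $S$ from every translate $gS$ says precisely that each $g \in \Aut_f$ maps some element of $S$ into $S$, i.e.\ $\bigcup_{v,w \in S} \Gamma_{v,w} = \Aut_f$. Counting then gives
\begin{equation*}
|\Aut_f| \;\le\; \sum_{v,w \in S} |\Gamma_{v,w}| \;=\; |S|^2 \,\frac{|\Aut_f|}{n},
\end{equation*}
so $|S| \ge \sqrt{n}$; this is exactly Lemma~\ref{prop:lower-bound-in-group-action-setting}, from which Theorem~\ref{thm:transitive-lower} follows at once. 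To repair your write-up, keep everything up to and including the observation that $S$ meets all its translates, and replace the voter-covering estimate by this covering of $\Aut_f$ by the cosets $\Gamma_{v,w}$.
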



The proof of Theorem \ref{thm:transitive-lower} relies on group-theoretic results described in Appendix~\ref{sec:primer}. To gain some intuition for the bound, suppose, as in the longest-run voting rule above, that voters are located on a circle and that $\Aut_f$ includes all rotations. These are the permutations of the form $\sigma(i) = i + k \text{ mod } n$. We know that two winning coalitions cannot be disjoint. Take, then, any winning coalition $S$ and denote by $S+k$ the winning coalition that is derived by adding $k$ (again, modulo $n$) to the label of each member. It follows that $S$ and $S+k$ must have a non-empty intersection, or that there are some $i,j \in S$ with $i-j=k$. Therefore, if we look at all the differences between two elements of $S$ (i.e., expressions of the form $i-j$, where $i,j \in S$), they encompass all $n$ rotations. In particular, the cardinality of these differences is $n$. On the other hand, the number of such differences is certainly bounded by the number of ordered pairs of members in $S$, which is $|S|^2$. It follows that $|S|^2 \geq n$, generating our bound.

\subsection{Generalized Representative Democracy Rules}

As already discussed, voting rules mimicking representative democracy are equitable, if not symmetric \`a la \cite{may1952set}. We now consider a class of equitable voting rules that generalize representative democracy rules. These capture the flavor of various hierarchical voting structures that contain more than two layers. For example, voters may belong to counties, which comprise states, which constitute a country. As we show, these sorts of hierarchical decision rules are associated with far smaller winning coalitions than $n/2$, but still substantially larger than $\sqrt{n}$.  

A voting rule $f\colon \Phi \to X$  is a {\em generalized representative democracy (GRD)} if the following hold.
  \begin{itemize}
    \item If $V=\{v\}$ is a singleton, then $f(\phi)=\phi(v)$.
    \item If $V$ is not a singleton, there exists a partition 
    $\{V_1,\dots,V_d\}$ of $V$ into $d$ sets such that
    \begin{align*}
      f(\phi) = \maj(f_1(\phi |_{V_1}), f_2(\phi |_{V_2}),\dots,f_d(\phi |_{V_d})),
    \end{align*}
    where each $f_i\colon X^{V_i}\to X$ is some generalized representative democracy rule, $\phi |_{V_i}$ is $\phi$ restricted to $V_i$, and $\maj$ is the majority rule.
  \end{itemize}
  
Any GRD rule is associated with a rooted tree that captures voters' hierarchical structure (as in Figure~\ref{fig:college} for the case of $d=3$). A GRD voting rule is equitable if, in the induced tree, the vertices in each level have the same degree.\footnote{Intuitively, the permutations required to shift one voter's role into another require the shift of that voter's entire ``county'' into the target role's ``county'', which can be done only when their numbers coincide.} 

The following result characterizes the size of winning coalitions in GRD voting rules.

\begin{theorem}
\label{thm:college}
If $f$ is an equitable generalized representative democracy rule for $n$ voters, then a winning coalition must have size at least $n^{\log_3 2}$. Conversely, for arbitrarily large $n$, there exist equitable generalized representative democracy voting rules with winning coalitions of size $n^{\log_3  2}$.
\end{theorem}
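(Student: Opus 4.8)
The plan is to reduce both directions to a clean statement about the degree sequence of the tree underlying an equitable GRD rule. By the characterization stated above, an equitable GRD rule corresponds to a level-regular rooted tree: there is a degree sequence $(d_1,\dots,d_L)$ such that every vertex at depth $i-1$ has exactly $d_i$ children, the leaves are the voters, and $n=\prod_{i=1}^L d_i$. The first step is to compute the size of the smallest winning coalition in terms of this sequence. At an internal node computing the majority of its $d$ children, the cheapest way to force an outcome is to force it in a strict majority, i.e.\ in $w(d):=\lfloor d/2\rfloor+1$ of the children. Since equitability makes all subtrees hanging off a given level identical, the cost of ``winning'' a subtree depends only on its depth, and unwinding the recursion from the root to the leaves shows that the minimal winning coalition has size exactly
\[
  W \;=\; \prod_{i=1}^L w(d_i), \qquad w(d)=\left\lfloor \tfrac{d}{2}\right\rfloor+1,
\]
so that every winning coalition has size at least $W$.

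The heart of the lower bound is then the pointwise inequality $w(d)\ge d^{\log_3 2}$ for every integer $d\ge 1$, which, multiplied over the levels, yields $W\ge\prod_i d_i^{\log_3 2}=n^{\log_3 2}$. To prove it I would first note that it suffices to treat odd $d$: for even $d=2k$ we have $w(2k)=k+1=w(2k+1)$ while $(2k)^{\log_3 2}<(2k+1)^{\log_3 2}$, so the odd case dominates. Writing $d=2k+1$, the claim becomes $k+1\ge(2k+1)^{\log_3 2}$, equivalently $(k+1)^{\log_2 3}\ge 2k+1$. Setting $h(x)=(x+1)^{\log_2 3}-(2x+1)$, one checks that $h(0)=h(1)=0$, that $h$ is convex on $[0,\infty)$ (since $\log_2 3>1$), and that $h'(1)>0$; convexity then forces $h$ to be increasing on $[1,\infty)$, whence $h(k)\ge 0$ for all integers $k\ge 0$. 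This pins down $d\in\{1,3\}$ as the cases of equality, explaining why branching factor $3$ is optimal and where the exponent $\log_3 2$ originates.

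For the converse I would exhibit the \emph{recursive majority-of-three} rule: take the balanced ternary tree of depth $L$, i.e.\ $d_1=\dots=d_L=3$. It is level-regular, hence equitable, and has $n=3^L$ voters. By the product formula its minimal winning coalition has size $W=2^L=(3^L)^{\log_3 2}=n^{\log_3 2}$, matching the lower bound exactly; letting $L\to\infty$ produces arbitrarily large such $n$.

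I expect the main obstacle to be the second and third steps together. Establishing the exact product formula $W=\prod_i w(d_i)$ rigorously requires equitability in an essential way, so that all subtrees at a given level coincide and the recursion on ``cost to win a subtree'' is well founded; and then the pointwise inequality, while routine once the convexity setup is in place, is the conceptual crux, since it is precisely the single-variable reduction to odd $d$ (with equality forced at $d=3$) that makes the bound tight and produces $\log_3 2$.
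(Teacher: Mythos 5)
Your converse direction is correct and coincides with the paper's: the ternary tree with $n=3^L$ voters and the recursive two-out-of-three coalition of size $2^L=n^{\log_3 2}$. Your single-variable inequality $\lfloor d/2\rfloor+1\ge d^{\log_3 2}$ is also correct, and is in fact sharper than the per-level factor $\frac{d+1}{2}$ used in the paper's induction $C(n)\ge\min_{d\mid n}\frac{d+1}{2}\,C(n/d)$. The genuine gap is the step you lean on for the lower bound: the exact product formula $W=\prod_i w(d_i)$, equivalently the claim that to force the outcome at a node a coalition must win outright a strict majority of its children. This is false, because the intermediate majorities take values in $\{-1,0,1\}$: a coalition can profit from a child subtree without winning it, by merely forcing it to a tie (output $0$), and this is strictly cheaper than winning when branching numbers are even. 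Concretely, take $n=4$ with two counties of two voters each, so $d_1=d_2=2$. Your formula predicts a minimal winning coalition of size $w(2)\,w(2)=4$, yet $S=\{1,2,3\}$ (all of the first county plus one member of the second) is winning: if every member of $S$ votes $+1$, the first county outputs $+1$, the second county outputs $0$ or $+1$ no matter what voter $4$ does, and the top-level majority is therefore $+1$. So the minimal winning coalition has size $3$, and the inequality you need --- ``every winning coalition has size at least $\prod_i w(d_i)$'' --- fails for trees with even branching. (The theorem itself is not contradicted, since $3\ge 4^{\log_3 2}\approx 2.4$; it is your intermediate claim that is false.)

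A correct argument must carry two quantities through the recursion: the minimal size $W_c$ of a coalition forcing a win in a child subtree, and the minimal size $Z_c$ of a coalition forcing a non-loss (output in $\{0,x\}$). A winning coalition at a node with $d$ children then wins $a$ children and blocks $b$ children with $2a+b\ge d+1$, so $W=\min\{aW_c+bZ_c\colon 2a+b\ge d+1,\ a+b\le d\}$, and one needs a joint inductive invariant relating $Z$ to $W$ (for instance $Z\ge W/2$, or something sharper) to extract a usable per-level factor. This is not a cosmetic repair: for the binary tree the true minimal sizes obey the Fibonacci recursion $W_L=W_{L-1}+W_{L-2}$ (giving $1,2,3,5,8,13,\dots$), and the bound $n^{\log_3 2}$ holds there only because the golden ratio $\varphi\approx 1.618$ exceeds $2^{\log_3 2}\approx 1.549$ --- a margin that no naive per-level product argument sees. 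You are in good company on this point: the paper's own one-line justification of its recursion also ignores ties, and its induction step requires $\frac{d+1}{2}\ge d^{\log_3 2}$, which fails at $d=2$ since $\frac{3}{2}\cdot 2^{-\log_3 2}<1$; the even-branching case is precisely where the result is delicate. But where the paper's recursion inequality is true and merely under-justified, your product formula is outright false, so the lower-bound half of your proposal cannot be salvaged without replacing its central claim.
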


There is an intriguing connection between this characterization and the so-called Hausdorff dimension of the Cantor set, which is $\log_3 2 \approx 0.63$.\footnote{The Cantor set can be constructed by starting from, say, the unit interval and iteratively deleting the open middle third of any sub-interval remaining. That is, in the first iteration we are left with $[0,1/3]\cup[2/3,1]$, in the second iteration we are left with $[0,1/9]\cup[2/9,1/3]\cup[2/3,7/9]\cup[8/9,1]$, etc. The fractal or Hausdorff dimension is a measure of ``roughness'' of a set. See \cite{peitgen1993fractals} and references therein.} The connection arises from the fact that GRD rules with the smallest winning coalitions are those in which, at each level, the subdivision is into three groups. In such rules, to construct a winning coalition, two of the three top ``representatives'' need to agree. Then, two of the voters of these representatives need to agree, and so on recursively. This precisely mimics the classical construction of the Cantor set. 


\begin{figure}[h!]
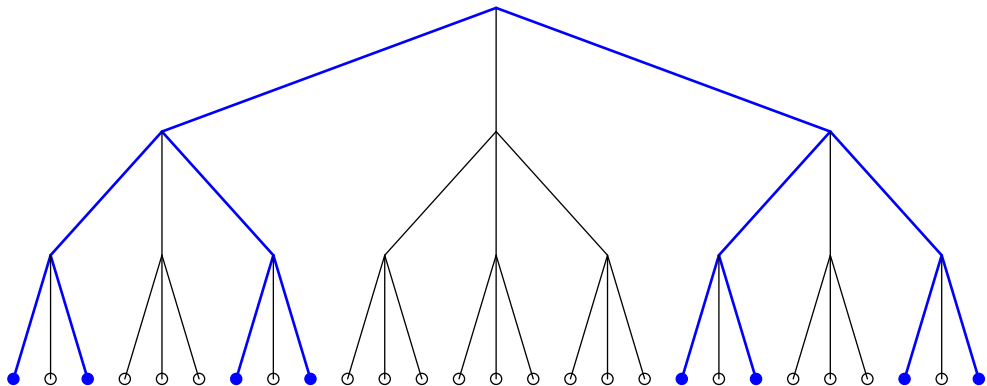

\begin{center}
\begin{asy}
size(13cm, 0); 
real yscale = 10.0; 
real circlesize = 0.45; 

void drawLR (pair center, real dist, bool isCenter)
{ 
	if (dist == 1.0)
    {
     
    }
    else if (dist == 3.0)
    {
    	pair left = center - (dist, yscale);
    	pair right = center + (dist, -yscale);
    	if (isCenter == false) {
        	draw(circle(left, circlesize), blue); 
            draw(circle(right, circlesize), blue);
            fill(circle(left, circlesize), blue); 
            fill(circle(right, circlesize), blue);
        	draw(center -- left, blue+linewidth(1)); 
        	draw(center -- right, blue+linewidth(1)); 
        }
        else {
            draw(circle(left, circlesize)); 
            draw(circle(right, circlesize));
            draw(center -- left);
            draw(center -- right);
        }
    	 
    	pair newcenter = center + (0, -yscale); 
    	draw(circle(newcenter, circlesize)); 
    	draw(center -- newcenter); 
    }
    else
    {
    	pair left = center - (dist, yscale);    	   
    	pair right = center + (dist, -yscale);     	
    	pair newcenter = center + (0, -yscale); 
   	 	
   	 	if (isCenter == false) {
   	 	    draw(center -- left, blue+linewidth(1)); 
   	 	    draw(center -- right, blue+linewidth(1));
   	 	} else {
   	 	    draw(center -- left);
   	 	    draw(center -- right); 
   	 	}
   	 	
    	draw(center -- newcenter); 
        
    	drawLR(left, dist/3, false); 
    	drawLR(newcenter, dist/3, true);
    	drawLR(right, dist/3, false); 
        
    }
}

void drawcenter(pair center, real dist)
{ 
	if (dist == 1.0)
    {
     
    }
    else if (dist == 3.0)
    {
    	pair left = center - (dist, yscale);
    	draw(circle(left, circlesize));   
    	pair right = center + (dist, -yscale); 
        draw(circle(right, circlesize)); 
    	pair newcenter = center + (0, -yscale); 
    	draw(circle(newcenter, circlesize)); 
    	draw(center -- left); 
    	draw(center -- newcenter); 
    	draw(center -- right);     
    }
    else
    {
    	pair left = center - (dist, yscale);    	   
    	pair right = center + (dist, -yscale);     	
    	pair newcenter = center + (0, -yscale); 
        
    	draw(center -- left); 
    	draw(center -- newcenter); 
    	draw(center -- right); 
        
    	drawcenter(left, dist/3); 
    	drawcenter(newcenter, dist/3);
    	drawcenter(right, dist/3); 
        
    }

}

pair ctest = (13, 5); 
real dist = 27.0; 

pair left = ctest - (dist, yscale);    	   
pair right = ctest + (dist, -yscale);     	
pair newc = ctest + (0, -yscale);

draw(ctest -- left, blue+linewidth(1)); 
draw(ctest -- newc); 
draw(ctest -- right, blue+linewidth(1));

drawLR(left, dist/3, false); 
drawLR(right, dist/3, false); 
drawcenter(newc, dist/3); 

\end{asy}
\end{center}
\caption{Generalized representative democracy voting rule. The leafs of the tree (at the bottom) represent the voters. At each intermediate node the results of the three nodes below are aggregated by majority.}\label{fig:college}
\end{figure}

\section{$\mathbf{k}$-Equitable Voting Rules}

So far, we have focused on voting rules in which individuals are indifferent between roles. Naturally, one could extend the notion and contemplate rules that are robust to larger coalitions of voters changing their roles in the population. This section analyzes such rules for arbitrary size $k$ of coalitions. With these harsher restrictions on collective-choice procedures, results similar to May's reemerge, although with important caveats. 

\begin{definition}
A voting rule is \textit{$k$-equitable} for $k \geq 1$ if, for every pair of ordered $k$-tuples $(v_1, \dots, v_k)$ and $(w_1, \dots, w_k)$ (with $v_i \neq v_j$ and $w_i \neq w_j$ for all $i \neq j$), there is a permutation $\sigma \in \Aut_f$ such that $\sigma(v_i) = w_i$ for $i = 1, \dots, k$. 
\end{definition}

\vspace{3mm}


Intuitively, $k$-equitable voting rules are ones in which every group of $k$ voters has the same ``joint role'' in the election. This restriction is certainly harsher than that imposed for equitable rules. Indeed, consider the representative democracy example of Figure~\ref{fig:college3}. Suppose Alex is assigned the role of $1$, while Bailey is assigned the role of $2$.The implication of $2$-equity is that the pair (Alex, Bailey) could potentially be associated with any pair $(i,j)$. But this is clearly not true here, since Alex and Bailey are in the same county, and thus it is impossible to associate them with, e.g., the roles of $1$ and $4$.\footnote{In group-theoretic terms, $f$ is $k$-equitable if and only if the group $\Aut_f$ is \textit{$k$-transitive}.}

In terms of roles and abstract voting rules, $k$-equity admits the following generalization of Proposition~\ref{clm:role-transitive}.
\begin{customprop}{2B}
  \label{clm:role-k-transitive}
  An abstract voting rule $f \colon X^R \to X$ is $k$-equitable iff there is a set of assignments $A$ such that
  \begin{enumerate}
    \item $f_a=f_b$ for all $a, b \in A$.
    \item For each set of $k$ roles $\{r_1,\ldots,r_k\} \subseteq R$ and $k$ voters $\{v_1,\ldots,v_k\} \subset V$ there is an $a \in A$ with $a(v_i)=r_i$ for $i \in \{1,\ldots,k\}$.
  \end{enumerate}
\end{customprop}
Thus, there is a menu of assignments that all yield the same rule, but that places no restrictions on which roles a coalition of $k$ voters takes. A perhaps illuminating analogy would be to think of equity as corresponding to strategy-proofness, and to think of $k$-equity as corresponding to strategy-proofness for coalitions of size $k$. Alternatively, $k$-equity is reminiscent of group-envy-freeness notions considered in allocation problems \citep[see, e.g.,][]{equity1974}. In a sense, $k$ is a parameter that interpolates between equity, corresponding to $k=1$, and May's symmetry, corresponding to $k=n$. 


We begin by examining $2$-equitable rules. Certainly, majority rule is $2$-equitable. As can be easily verified, none of the voting rule examples mentioned so far, other than majority, is $2$-equitable. As it turns out, for most population sizes, winning coalitions of size at least $n/2$ are endemic to $2$-equitable voting rules.

We say that almost every natural number satisfies a property $P$ if the subset $N_P \subseteq \mathbb{N}$ of the natural numbers that have property $P$ satisfies
\begin{align*}
    \lim_{n \to \infty}\frac{|N_P \cap \{1,\ldots,n\}|}{n} = 1.
\end{align*}

\begin{theorem}
\label{thm:2-eq-neg}
  For almost every natural number $n$, every $2$-equitable voting rule for $n$ voters has no winning coalitions of size less than $n/2$. In particular, for almost all $n$, the only $2$-equitable, neutral, positively responsive voting rule is majority.
\end{theorem}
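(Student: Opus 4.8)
The plan is to convert the $2$-equity hypothesis into a purely group-theoretic statement and then reduce to Lemma~\ref{lem:alternating-or-symmetric-then-big-winning-coalitions}. By definition, $f$ is $2$-equitable exactly when $\Aut_f$ acts $2$-transitively on $V$, so $\Aut_f$ is a $2$-transitive subgroup of $S_n$. If I can show that $\Aut_f \supseteq A_n$, then the lemma yields both assertions at once: every winning coalition has size at least $n/2$, and the only neutral, positively responsive rule is majority. The winning-coalition bound in fact also follows directly, which is worth recording as a sanity check: since $A_n$ is $(n-2)$-transitive, for any $S$ with $|S| < n/2$ the complement $V \setminus S$ is strictly larger than $S$, so there is an even permutation carrying $S$ onto a subset of $V \setminus S$; this produces a winning coalition disjoint from $S$, contradicting the fact that winning coalitions pairwise intersect.

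Thus everything reduces to the claim that for almost every $n$, every $2$-transitive subgroup of $S_n$ contains $A_n$. Here I would invoke the classification of finite $2$-transitive permutation groups, a consequence of the Classification of Finite Simple Groups. It states that a $2$-transitive group of degree $n$ either contains $A_n$, or its degree lies in one of a short list of families: the affine groups force $n = p^d$ a prime power; the almost simple groups have degree $(q^d-1)/(q-1)$ (projective groups $PSL_d(q)$, specializing to $q+1$ when $d=2$), $q^2+1$ or $q^3+1$ (unitary, Suzuki, Ree), $2^{2d-1} \pm 2^{d-1}$ (symplectic groups over $\mathbb{F}_2$), or one of finitely many sporadic degrees such as $11, 12, 15, 22, 23, 24, 176, 276$. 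Writing $E$ for the set of degrees arising in these exceptional families, I must show that $E$ has density zero.

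The core estimate is a bound on $|E \cap \{1, \dots, N\}|$. The prime powers contribute $\pi(N) + O(\sqrt{N}) = O(N/\log N)$ by the prime number theorem, and the degrees $q+1$ coming from $PSL_2(q)$ number at most the prime powers below $N$, again $O(N/\log N)$. Every remaining family has degree growing at least quadratically in its parameter (for $PSL_d(q)$ with $d \geq 3$ the degree exceeds $q^2$), so each contributes $O(\sqrt{N})$ values, and only $O(\log N)$ such families can have any member below $N$; the sporadic degrees contribute $O(1)$. Summing, $|E \cap \{1, \dots, N\}| = O(N/\log N) = o(N)$, so $E$ has density zero. Hence for every $n \notin E$—that is, for almost every $n$—any $2$-transitive $\Aut_f$ must contain $A_n$, and the theorem follows as above.

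The main obstacle I anticipate is not any single computation but marshaling the classification cleanly and confirming that the union of infinitely many exceptional families stays density zero. The deep content—that no stray $2$-transitive groups exist outside the listed families—is imported wholesale from the Classification of Finite Simple Groups; the genuine work on our side is the elementary but careful density bookkeeping, where the dominant and only nontrivial terms are the prime powers (affine type) and the degrees $q+1$ (the $PSL_2$ action on the projective line), both controlled by the prime number theorem.
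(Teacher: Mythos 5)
Your proposal is correct and takes essentially the same route as the paper: identify $2$-equity with $2$-transitivity of $\Aut_f$, show the set of degrees admitting $2$-transitive groups other than $S_n$ and $A_n$ has density zero, and conclude via Lemma~\ref{lem:alternating-or-symmetric-then-big-winning-coalitions}. The only difference is that the paper imports the density-zero statement as a black box from \cite{cameron1982} (which bounds the exceptional degrees by $3n/\log n$), whereas you re-derive it from the CFSG-based classification of $2$-transitive groups; your density bookkeeping, with the prime powers and the degrees $q+1$ as the dominant $O(N/\log N)$ terms, is sound.
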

Thus, for almost all $n$, the assumption of symmetry in May's Theorem can be substituted with the much weaker assumption of $2$-equity.

The proof of Theorem \ref{thm:2-eq-neg} relies on modern group-theoretical results that were not available when May's Theorem was introduced \citep{cameron1982}. As it turns out, there is a vanishing share of integers for which there exist $2$-transitive groups that are neither the set of all permutations nor the set of even permutations. We show in Lemma~\ref{lem:alternating-or-symmetric-then-big-winning-coalitions} in the appendix that those latter groups yield winning coalitions of size at least $n/2$, implying the result.

Theorem \ref{thm:2-eq-neg} states that for most population sizes, $2$-equitable rules imply large winning coalitions. This notably does not hold for \textit{all} population sizes. In Appendix~\ref{sec:2-3-equitable}, we construct $2$-equitable and $3$-equitable voting rules with small winning coalitions, which apply to arbitrarily large population sizes.

When considering more stringent equity restrictions, results are much starker and conclusions hold for \textit{all} population sizes.
\begin{theorem}
\label{thm:6-transitive}
Every $6$-equitable voting rule has no winning coalitions of size less than $n/2$. In particular, the only $6$-equitable, neutral, positively responsive voting rule is majority.\footnote{Furthermore, if $n > 24$, the same conclusions follow from the weaker assumption of $4$-equity.} 
\end{theorem}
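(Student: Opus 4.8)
The plan is to reduce Theorem~\ref{thm:6-transitive} to a group-theoretic classification statement, exactly parallel to how Theorem~\ref{thm:2-eq-neg} was reduced. Recall from the discussion of Theorem~\ref{thm:2-eq-neg} that Lemma~\ref{lem:alternating-or-symmetric-then-big-winning-coalitions} in the appendix already gives the crucial bridge: if $\Aut_f$ is either the full symmetric group $S_n$ or the alternating group $A_n$, then every winning coalition of $f$ has size at least $n/2$, and majority is the unique neutral, positively responsive rule with that automorphism group. Since a $6$-equitable voting rule is by definition one whose automorphism group $\Aut_f$ is $6$-transitive, the entire theorem follows if I can show that \emph{the only $6$-transitive subgroups of $S_n$ are $S_n$ itself and $A_n$}. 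So the real content is the group theory, and the voting-theoretic conclusion is then immediate.

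The main step, therefore, is to invoke the classification of multiply transitive permutation groups. First I would state the known fact (a consequence of the classification of finite simple groups, though for this particular statement one can cite the earlier results going back to work building on Jordan) that the only $4$-transitive groups other than $S_n$ and $A_n$ are the Mathieu groups $M_{11}, M_{12}, M_{23}, M_{24}$, acting on $11, 12, 23,$ and $24$ points respectively. Among these, $M_{12}$ and $M_{24}$ are sharply $5$-transitive and $5$-transitive respectively, while $M_{11}$ and $M_{23}$ are $4$-transitive; crucially, \emph{none} of the Mathieu groups is $6$-transitive. Hence for $6$-equity the exceptional cases are entirely excluded for every $n$, and $\Aut_f \in \{A_n, S_n\}$ is forced. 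Applying Lemma~\ref{lem:alternating-or-symmetric-then-big-winning-coalitions} then yields the bound on winning coalitions and the uniqueness of majority.

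For the footnote strengthening to $4$-equity when $n > 24$: here I would note that a $4$-equitable rule forces $\Aut_f$ to be $4$-transitive, so $\Aut_f$ is $S_n$, $A_n$, or one of the four Mathieu groups. But the Mathieu groups act only on $n \in \{11, 12, 23, 24\}$ points, so once $n > 24$ they are ruled out on the grounds of the point count alone, again leaving only $A_n$ and $S_n$ and letting Lemma~\ref{lem:alternating-or-symmetric-then-big-winning-coalitions} finish the argument.

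The step I expect to carry the real weight is the citation and correct application of the multiple-transitivity classification; the genuinely delicate part is making sure the transitivity thresholds are stated correctly, namely that $6$-transitivity excludes \emph{all} sporadic exceptions for \emph{all} $n$ (so the main theorem needs no exceptional-$n$ caveat, unlike Theorem~\ref{thm:2-eq-neg}), whereas $4$-transitivity only excludes them once the degree exceeds $24$. Everything downstream of the group classification is routine given Lemma~\ref{lem:alternating-or-symmetric-then-big-winning-coalitions}, so the obstacle is purely in correctly quoting the classification and matching each Mathieu group to its exact degree of transitivity.
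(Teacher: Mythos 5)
Your proposal is correct and takes essentially the same route as the paper's proof: invoke the classification of multiply transitive permutation groups (the only $4$- or $5$-transitive groups besides $S_n$ and $A_n$ being the Mathieu groups, of degree at most $24$, and no $6$-transitive groups existing beyond $S_n$ and $A_n$), then conclude via Lemma~\ref{lem:alternating-or-symmetric-then-big-winning-coalitions}. Your treatment of the $n>24$ footnote via the degree bound on the Mathieu groups also matches the paper's argument.
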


The proof of Theorem \ref{thm:6-transitive}  relies on discoveries from the 1980's and 1990's that showed that, for any $n$, the only $6$-transitive groups are the group of all permutations and the group of even permutations. These results are a consequence of the successful completion of a large project, involving thousands of papers and hundreds of authors, called the {\em Classification of Finite Simple Groups}, see \cite{aschbacher2011classification}. 


We stress that $k$-equity is a strong restriction.  Nonetheless, for any fixed $k$, $k$-equity is a far weaker restriction than May's symmetry. In that respect, Theorem \ref{thm:6-transitive}, like Theorem~\ref{thm:2-eq-neg}, offers a strengthening of May's original result.



\section{Towards a Characterization of Equitable Voting Rules}
\label{sec:pd}

As our examples throughout illustrate, the set of equitable voting rules is broad and does not admit a simple universal procedural description. Their full characterization would be as complex as the full classification of finite simple groups alluded to above, and hence is beyond the scope of this paper. 


We start here by classifying the equitable voting rules for electorate sizes of the form $n = p$ and $n = p^2$ with $p$ a {\em prime}. We do so for two reasons. First, these cases are easier to handle, while still illustrating some of the complexities entailed in the general characterization of equitable voting rules. To gain some intuition for why these cases are easier, consider the representative democracy rule. When $n=p$, the only representative democracy rule is majority, since there is no way to divide the voters into non-singleton counties of equal size. In particular, the class of equitable rules for $n=p$ is drastically restricted. The second reason we focus on these cases is that it allows us to contemplate general voting rules that can be applied for all electorate sizes $n$.

For our classification, it will be useful to define {\em cyclic} voting rules and {\em $2$-cyclic} voting rules.

A voting rule $f$ for $n$ voters is said to be {\em cyclic} if one can identify the voters with the set $\{0,\ldots,n-1\}$ in such a way that the permutation $\sigma \colon \{0,\ldots,n-1\} \to \{0,\ldots,n-1\}$ given by $\sigma(i) = i + 1 \text{ mod } n$ is an automorphism of $f$. Intuitively, a voting rule is cyclic if the voters can be arranged on a circle in such a way that rotating, or shifting all voters one space to the right---tantamount to an application of $\sigma$---does not affect the outcome. An example of a cyclic voting rule is the longest-run rule described in \S\ref{sec:coalitions}. A perhaps less obvious example is that of the representative democracy rule; the arrangement on the cycle entails positioning members of the same county equidistantly along the cycle. In the example of Figure~\ref{fig:college} in which voters $\{1,2,\ldots,9\}$ are arranged in counties $\{\{1,2,3\},\{4,5,6,\},\{7,8,9\}\}$, if we arrange the voters along the cycle by the order $(1,4,7,2,5,8,3,6,9)$, then applying $\sigma$,  results in the order $(4,7,2,5,8,3,6,9,1)$, and so the first county is mapped to the second, the second is mapped to the third, and the third back to the first. Thus, the voting rule is unchanged by $\sigma$, and hence it is cyclic.

A voting rule $f$ for $n = n_1 \times n_2$ voters is said to be {\em $2$-cyclic} if one can identify the voters with the set $\{0,\ldots,n_1-1\} \times \{0,\ldots,n_2-1\}$ in such a way that the permutations $\sigma_1$ and $\sigma_2$ given by 
\begin{align*}
\sigma_1(i_1,i_2) &= (i_1+1\text{ mod } n_1,\,i_2)\\    
\sigma_2(i_1,i_2) &= (i_1,\,i_2+1\text{ mod } n_2)    
\end{align*}
are automorphisms of $f$. Intuitively, a voting rule is $2$-cyclic if one can arrange the voters on an $n_1 \times n_2$ grid such that shifting all voters to the right (and wrapping the rightmost ones back to the left) or shifting all voters up (again wrapping the topmost ones to the bottom) does not affect the outcome. It is easy to see that the cross committee consensus rule is an example of such a rule, as is the representative democracy rule. 




\begin{customprop}{3}
\label{prop:pd}
Let $f \colon X^V \to X$ be an equitable voting rule, and let $p$ be prime. If $n=p$, then $f$ is cyclic. If $n=p^2$, then $f$ is either cyclic, or $2$-cyclic, or both.
\end{customprop}
The generalization to $n=p^d$ for $d > 2$ is more intricate, and is not a straightforward extension to higher dimensional $d$-cyclic rules. See our discussion in \S\ref{app:pd}.


This proposition has an important implication to the design of simple, equitable voting rules that are not tailored to particular electorate sizes. Majority rule is one such rule---one only needs to tally the votes and consider the difference between the number of supporters of one alternative relative to the other. The longest-run voting rule is another such example. In fact, any such rule must also work for electorates comprised of a prime number of voters, and in particular \textit{must} be cyclic. It would be interesting to understand whether a far larger class of rules than cyclic voting rules can work for \textit{almost all} electorate sizes.

\section{Conclusions}

In this paper we study equity, a notion of procedural fairness that captures equality between different voters' roles.

The voting rules that satisfy May's symmetry axiom admit a simple description: a voting rule is symmetric, or anonymous, if and only if the outcome depends only on the number of voters who choose each possible vote. In contrast, the set of equitable voting rules is much richer, and its complexity is intimately linked to frontier problems in mathematics; in particular, the classification of finite simple groups. This paper includes a number of diverse examples, including generalized representative democracy, cross committee 
consensus, and a number of additional constructions that appear in the appendix, but the class of all equitable rules is larger yet. Understanding which equitable rules satisfy different desirable conditions---non-malleability, inclusiveness, etc.---could be an interesting avenue for future research.


We believe the approach taken here could potentially be useful for various other contexts. For example, symmetric games are often thought of as ones in which \textit{any} permutation of players' identities does not affect individual payoffs \citep[e.g.,][page 18]{dasgupta1986existence}. As is well known, such finite games have symmetric equilibria. Interestingly, in his original treatise on games, Nash took an approach to symmetry that is similar to ours, studying the automorphism group of the game. He showed that equity, analogously defined for games, suffices for the existence of symmetric equilibria: that is, it suffices that for every two players $v$ and $w$ there is an automorphism of the game that maps $v$ to $w$ \citep[page 289]{nash1951non}.\footnote{Theorem 2 in \cite{nash1951non} states that ``Any finite game has a symmetric equilibrium point.'' Now, Nash's definition of a symmetric equilibrium is a strategy profile $\mathfrak{s}$ such that $\mathfrak{s}_i = \mathfrak{s}_j$ whenever there is an automorphism $\chi$ of the game with $\chi(i)=j$. In particular, for there to be a symmetric strategy in the sense we usually consider, where all players use the same strategy, it suffices for there to be a transitive automorphism group, i.e.\ one in which for every $i,j$ there is an automorphism $\chi$ with $\chi(i)=j$.} It would be interesting to explore further the consequences of equity so defined in more general strategic interactions.

\newpage
\appendix

\section{A Primer on Finite Groups}
\label{sec:primer}
This section contains what is essentially a condensed first chapter of a book on finite groups \citep[see, e.g.,][]{rotman2012introduction}, and is provided for the benefit of readers who are not familiar with the topic. The terms and results covered here suffice to prove the main results of this paper.

Denote by $N = \{1,\ldots,n\}$. A {\em permutation} of $N$ is a bijection $g \colon N \to N$. The inverse of a permutation $g$ is denoted by $g^{-1}$ (so $g^{-1}(g(i)) = i$), and the composition of two permutations $g$ and $h$ is simply $g h$; i.e., if $k = g h$ then $k(i) = g(h(i))$.

A {\em group}---for our purposes---will be a non-empty set of permutations that (1) contains $g^{-1}$ whenever it contains $g$, and (2) contains $g h$ whenever it contains both $g$ and $h$. It follows from this definition that every group must include the trivial, identity permutation $e$ that satisfies $e(i)=i$ for all $i$.

Groups often appear as sets of permutations that {\em  preserve some invariant}. In our case, $\Aut_f$ is the group of permutations of the voters that preserves every outcome of $f$. It is easy to see that $\Aut_f$ is indeed a group.

A subgroup $H$ of $G$ is simply a subset of $G$ that is also a group. Given $g \in G$, we denote 
$$
  g H = \{g h \in G\, :\, h \in H\}.
$$  
The sets $g H$ are in general not subgroups, and are called the {\em left cosets} of $H$ (the right cosets are of the form $H g$). It is easy to verify that all left cosets are disjoint, and that each has the same size as $H$. It follows that the size of $G$ is divisible by the size of $H$.

Given an element $i \in N$, we denote by $G_i$ the set of permutations that fix $i$. That is, $g \in G_i$ if $g(i) =i$. $G_i$ is a subgroup of $G$. It is called the {\em stabilizer} of $i$.

The {\em $G$-orbit} of $i \in N$ is the set of $j \in N$ such that $j = g(i)$ for some $g \in G$, and is denoted by $G i$. As it turns out, if $j$ is in the orbit of $i$ then the set of $g \in G$ such that $g(i)=j$ is a coset of the stabilizer $G_i$. It follows that there is a bijection between the orbit $G i$ and the cosets of $G_i$. This is called the {\em Orbit-Stabilizer Theorem}.

Recall that $G$ is transitive if for all $i,j$ there is a $g \in G$ such that $g(i)=j$. This is equivalent to there existing only a single $G$-orbit, or that $j$ is in the orbit of $i$ for every $i,j$. Therefore, if $G$ is transitive, the orbit $G i$ is of size $n$, and since we can identify this orbit with the cosets of $G_i$, there are $n$ such cosets. Since they are all the same size as $G_i$, and since they form a partition of $G$, each coset of $G_i$ must be of size $|G|/n$. We will use this fact in the proof of Theorem~\ref{thm:transitive-lower}.

\section{Proofs}
\label{sec:proofs}
\subsection{Proofs of Propositions~\ref{clm:anonymous},~\ref{prop:identical},~\ref{clm:role-transitive} and~\ref{clm:role-k-transitive}}
\begin{proof}[Proof of Proposition \ref{clm:anonymous}]
We begin by showing that if $f$ is anonymous, then all assignments are equivalent under $f$.

Assume that $f$ is symmetric, and let $a$ and $b$ be two assignments. Define $\sigma = b \circ a^{-1}$. Then, by definition of symmetry, for every $\phi \in X^V$,
\begin{align*}
    f_a(\phi) = f(\phi \circ a^{-1})
    = f((\phi \circ a^{-1})^{\sigma})
    = f(\phi \circ a^{-1} \circ \sigma^{-1})
    = f(\phi \circ a^{-1} \circ (a \circ b^{-1}))
    = f(\phi \circ b^{-1})
    = f_b(\phi)
\end{align*}
and hence $a$ and $b$ are equivalent. Since $a$ and $b$ were arbitrary, it follows that all assignments are equivalent under $f$.


Conversely, suppose that all assignments are equivalent under $f$. We need to show that for any $\sigma \in S(R)$ and any $\phi \in X^R$, $f(\phi) = f(\phi^{\sigma})$. Fix $\sigma$ and $\phi$, let $a$ be any assignment, and let $b = \sigma \circ a$. Then since $a$ and $b$ are equivalent,
\begin{align*}
    f(\phi) = f((\phi \circ a) \circ a^{-1})
    = f_a(\phi \circ a)
    = f_b(\phi \circ a)
    = f(\phi \circ (a \circ b^{-1}))
    = f(\phi \circ \sigma^{-1})
    = f(\phi^{\sigma}).
\end{align*}
Since $\sigma$ and $\phi$ were arbitrary, it follows that $f$ is symmetric.
\end{proof}

In analogy with the notion of equivalence of roles given in Section~\ref{sec:roles}, say that two $k$-tuples of roles $\mathbf{r}$ and $\mathbf{s}$ are equivalent if there is a $k$-tuple $\mathbf{v}$ of voters and a pair of assignments $a$ and $b$ such that $f_a = f_b$, $a(\mathbf{v}) = \mathbf{r}$, and $b(\mathbf{v}) = \mathbf{s}$. Proposition~\ref{prop:identical} then follows from the next proposition by setting $k$ to $1$: 

\begin{customprop}{3}
\label{prop:k-identical}
An abstract voting rule $f \colon X^R \to X$ is $k$-equitable if and only if all $k$-tuples of roles are equivalent under $f$.
\end{customprop}

\begin{proof}
Fix an assignment $a$. The rule $f$ is $k$-equitable if and only if $f_a$ is $k$-equitable, which holds if and only if for every pair of $k$-tuples $(v_1, \dots, v_k)$ and $(w_1, \dots, w_k)$ of voters there is a $\sigma \in S(V)$ such that $\sigma((v_1, \dots, v_k)) = (w_1, \dots, w_k)$ and for all $\phi \in X^V$, $f_a(\phi) = f_a(\phi^{\sigma})$. Since
\begin{align*}
    f_a(\phi^{\sigma}) &= f((\phi \circ \sigma^{-1}) \circ a^{-1})\\
    &= f(\phi \circ (a \circ \sigma)^{-1})\\
    &= f_{a \circ \sigma}(\phi),
\end{align*}
it follows that $f$ is $k$-equitable if and only if for every pair of $k$-tuples $(v_1, \dots, v_k)$ and $(w_1, \dots, w_k)$ of voters there is a $\sigma \in S(V)$ such that $\sigma((v_1, \dots, v_k)) = (w_1, \dots, w_k)$ and $f_a = f_{a \circ \sigma}$.

Now, this holds if and only if for every pair of $k$-tuples $(r_1, \dots, r_k)$ and $(s_1, \dots, s_k)$ of roles there is a $\sigma \in S(V)$ such that $\sigma(a^{-1}(r_1, \dots, r_k)) = a^{-1}(s_1, \dots, s_k)$ and $f_a = f_{a \circ \sigma}$. But this holds if and only if for every such pair of $k$-tuples of roles, there is a $\sigma \in S(V)$ and a $k$-tuple of voters $(v_1, \dots, v_k)$ such that $f_a = f_{a \circ \sigma}$, $a((v_1, \dots, v_k)) = (r_1, \dots, r_k)$, $(a \circ \sigma)((v_1, \dots, v_k)) = (s_1, \dots, s_k)$, which holds if and only if every pair of $k$-tuples of roles is equivalent under $f$.
\end{proof}

We now prove Proposition~\ref{clm:role-k-transitive}. Proposition~\ref{clm:role-transitive} then follows directly by setting $k$ to $1$.

\begin{proof}[Proof of Proposition~\ref{clm:role-k-transitive}]
We first show that if $f$ is $k$-equitable, then there is a set $A$ as above. 

Assume $f$ is $k$-equitable. Fix an assignment $a$, and let
\begin{align*}
    A = \{ b \text{ an assignment s.t. } f_a = f_b \}. 
\end{align*}
(1) is immediate from the definition of $A$. To see that (2) holds, note that for any $k$-tuple of roles $(r_1, \dots, r_k)$ and any $k$-tuple of voters $(v_1, \dots, v_k)$, it follows from a result analogous to Proposition~\ref{prop:k-identical} that since $f$ is $k$-equitable, there are assignments $c$ and $d$ such that $c((v_1, \dots, v_k)) = (r_1, \dots, r_k)$, $d((v_1, \dots, v_k)) = a((v_1, \dots, v_k))$, and $f_c = f_d$. Now, $f_c = f_d$ implies that for all $\phi \in X^V$, $f_c(\phi) = f_d(\phi)$, which implies that for all $\phi \in X^V$, $f_c(\phi \circ (a^{-1} \circ d)) = f_d(\phi \circ (a^{-1} \circ d))$. It follows that, for all $\phi \in X^V$,
\begin{align*}
    f_{c \circ d^{-1} \circ a}(\phi) &= f(\phi \circ (c \circ d^{-1} \circ a)^{-1})\\
    &= f((\phi \circ (a^{-1} \circ d)) \circ c^{-1})\\
    &= f_c(\phi \circ (a^{-1} \circ d))\\
    &= f_d(\phi \circ (a^{-1} \circ d))\\
    &= f((\phi \circ (a^{-1} \circ d)) \circ d^{-1})\\
    &= f(\phi \circ a^{-1})\\
    &= f_a(\phi),
\end{align*}
and hence $f_{c \circ d^{-1} \circ a} = f_a$. But this implies that $c \circ d^{-1} \circ a \in A$. Since $(c \circ d^{-1} \circ a)((v_1, \dots, v_k)) = c(d^{-1}(a((v_1, \dots, v_k)))) = c((v_1, \dots, v_k)) = (r_1, \dots, r_k)$, (2) then follows.

We now show that if there is a set $A$ as above, then $f$ is $k$-equitable.

Assume there is such an $A$. By a result analogous to Proposition~\ref{prop:k-identical}, it is sufficient to show that all $k$-tuples of roles are equivalent under $f$. But this follows immediately from (2) and the definition of equivalence of $k$-tuples of roles.
\end{proof}

\subsection{Proof of Lemma~\ref{lem:make-voting-rule-from-winning-coalitions}}

\begin{proof}[Proof of Lemma~\ref{lem:make-voting-rule-from-winning-coalitions}]
Let $f$ be the voting rule defined as follows. For a voting profile $\phi$, if there is a set $W \in \mathcal{W}$ such that $\phi(w) = 1$ for all $w \in W$, then $f(\phi) = 1$, and similarly, if there is a set $W \in \mathcal{W}$ such that $\phi(w) = -1$ for all $w \in W$, then $f(\phi) = -1$. This is well-defined, since if there are two such sets $W$, they must agree because they intersect. If there are no such sets, then $f(\phi)$ is determined by majority.

That $f$ is neutral follows immediately from the symmetry in the definition of $f$ when some $W \in \mathcal{W}$ agrees on either $1$ or $-1$ and the fact that majority is neutral. To see that $f$ is positively responsive, suppose that $f(\phi) \in \{0, 1\}$, $\phi'(x) \geq \phi(x)$ for all $x \in V$, and $\phi'(y) > \phi(y)$ for some $y \in V$. Since $f(\phi) \neq -1$, there is no set $W \in \mathcal{W}$ such that $\phi(x) = -1$ for all $x \in W$, hence the same is true for $\phi'$. If there is some set $W \in \mathcal{W}$ such that $\phi'(x) = 1$ for all $x \in W$, then $f(\phi') = 1$. If not, then the same is true of $\phi$, and hence by positive responsiveness of majority, $f(\phi') = 1$.

Finally, it is immediate from the definition of $f$ that every $W \in \mathcal{W}$ is a winning coalition.
\end{proof}

\subsection{Proof of Theorem~\ref{thm:transitive-upper}}
\label{sec:longest-run}

\newcommand{\roof}[1]{\lceil #1 \rceil}

\begin{proof}[Proof of Theorem~\ref{thm:transitive-upper}]
The longest-run voting rule is equitable, since any rotation of the cycle is an automorphism. That is, for every $k$, the map $\sigma \colon V \to V$ defined by $\sigma(i) = i + k \text{ mod } n$ leaves the outcome unchanged. Furthermore, for every pair of voters $i,j$, if we set $k = i-j$, then $\sigma(i)=j$.

For every $V$ of size $n$, we claim that the longest-run rule $\ell \colon X^V \to X$ has winning coalitions of size $2\roof{\sqrt{n}\;}-1$.

Let 
$$W = \{0, \dots, \roof{\sqrt{n}\;}-1\} \cup \{w\,:\, w \text{ mod } \roof{\sqrt{n}\;} = 0\}.$$

Any run that is disjoint from $W$ is of length at most $\roof{\sqrt{n}\;}-1$ since the second set in the definition of $W$ is comprised of voters who are at most $\roof{\sqrt{n}\;}$ apart. However, the first set is a contiguous block of length $\roof{\sqrt{n}\;}$. Hence, if all $w \in W$ vote identically in $\{-1,1\}$, the longest run will be a subset of $w$, and hence the outcome will be the vote cast by all members of $W$. It then follows that $W$ is a winning coalition.

Finally,
\begin{align*}
    |W| &= |\{0, \dots, \roof{\sqrt{n}\;}-1\}| + |\{w\,:\, w \text{ mod } \roof{\sqrt{n}\;} = 0\}| - |\{0\}|\\
    &= \roof{\sqrt{n}\;} + |\{w\,:\, w \text{ mod } \roof{\sqrt{n}\;} = 0\}| - 1\\
    &\leq 2 \roof{\sqrt{n}\;} - 1.
\end{align*}
Since every superset of a winning coalition is again a winning coalition, the result follows.
\end{proof}

\subsection{Proof of Theorem~\ref{thm:transitive-lower}}
Readers who are not familiar with the theory of finite groups are encouraged to read \S\ref{sec:primer} before reading this proof.

Recall that the group of all permutations of a set of size $n$ is denoted by $S_n$.

The next lemma shows that if a group $G$ acts transitively on $\{1,\ldots,n\}$, then any set $S$ that intersects all of its translates (i.e., sets of the form $g S$ for $g \in G$) must be of size at least $\sqrt{n}$. The proof of the theorem will apply this lemma to a winning coalition $S$.
\begin{lemma}
\label{prop:lower-bound-in-group-action-setting}
Let $G \subset S_n$ be transitive, and suppose that $S \subseteq V$ is such that for all $g \in G$, $gS \cap S \neq \emptyset$. Then $|S| \geq \sqrt{n}$.
\end{lemma}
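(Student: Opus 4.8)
The plan is to generalize the counting argument sketched in the text for the cyclic case to an arbitrary transitive group, replacing ``differences $i-j$'' by cosets of a point stabilizer. The key observation is to rephrase the hypothesis $gS \cap S \neq \emptyset$ as a covering statement about $G$. Concretely, $gS \cap S \neq \emptyset$ means exactly that there exist $s, s' \in S$ with $g(s') = s$. Hence every $g \in G$ lies in at least one of the sets
\begin{align*}
  A_{s,s'} = \{ g \in G \,:\, g(s') = s \}, \qquad s, s' \in S,
\end{align*}
so that $G = \bigcup_{s,s' \in S} A_{s,s'}$.

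Next I would compute the size of each $A_{s,s'}$ using the Orbit-Stabilizer Theorem from Appendix~\ref{sec:primer}. For a fixed $s'$, the set $A_{s,s'}$ is precisely the set of group elements sending $s'$ to $s$; since $G$ is transitive, $s$ lies in the orbit of $s'$, and this set is a left coset of the stabilizer $G_{s'}$. As noted in the primer, transitivity forces the orbit of $s'$ to have size $n$, so there are $n$ cosets of $G_{s'}$, each of size $|G|/n$. Therefore $|A_{s,s'}| = |G|/n$ for every pair $(s, s')$ with $s, s' \in S$.

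Combining the covering with the union bound then gives
\begin{align*}
  |G| = \Bigl| \bigcup_{s, s' \in S} A_{s,s'} \Bigr|
  \;\leq\; \sum_{s, s' \in S} |A_{s,s'}|
  \;=\; |S|^2 \cdot \frac{|G|}{n}.
\end{align*}
Dividing both sides by $|G| > 0$ yields $1 \leq |S|^2 / n$, i.e.\ $|S|^2 \geq n$, which is the desired bound $|S| \geq \sqrt{n}$.

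I expect the only real subtlety to be the identification of $A_{s,s'}$ with a coset of $G_{s'}$ of size $|G|/n$; this is where transitivity is genuinely used, since without it the orbit of $s'$ could be smaller than $n$ and the per-coset count would change. The union bound and the final arithmetic are routine. The theorem itself then follows by applying this lemma with $G = \Aut_f$ (which acts transitively precisely because $f$ is equitable) and $S$ a winning coalition: distinct winning coalitions cannot be disjoint, and every translate $gS$ of a winning coalition is again a winning coalition, so $gS \cap S \neq \emptyset$ for all $g \in \Aut_f$, and the lemma gives $|S| \geq \sqrt{n}$.
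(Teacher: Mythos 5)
Your proposal is correct and follows essentially the same argument as the paper: your sets $A_{s,s'}$ are exactly the paper's $\Gamma_{v,w}$, and the covering-plus-union-bound via the Orbit-Stabilizer Theorem (each coset having size $|G|/n$ by transitivity) is identical. The concluding application to winning coalitions also matches the paper's derivation of Theorem~\ref{thm:transitive-lower}.
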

\begin{proof}
For any $v, w \in V$, define $\Gamma_{v, w} = \{g \in G : g(v) = w\}$. 
Then $\Gamma_{v, w}$ is a left coset of the stabilizer of $v$. Hence, and since the action is transitive, it follows from the Orbit-Stabilizer Theorem that $|\Gamma_{v, w}| = \frac{|G|}{n}$. If $gS \cap S \neq \emptyset$ for all $g \in G$, then for any $g \in G$, there exists $v, w \in S$ such that $g(v) = w$, hence 
\begin{align*}
	\bigcup_{v, w \in S} \Gamma_{v, w} = G. 
\end{align*}
So
\begin{align*}
	|G| &= \left|\bigcup _{v, w} \Gamma_{v, w}\right| \le \sum _{v, w} |\Gamma_{v, w}| = |S|^2\frac{|G|}{n},
\end{align*}
and we conclude that $|S| \ge \sqrt{n}$.  
\end{proof}
	
Our lower bound (Theorem~\ref{thm:transitive-lower}) is an immediate corollary of this claim. 

\begin{proof}[Proof of Theorem~\ref{thm:transitive-lower}]
Let $f$ be an equitable voting rule for the voter set $V$. Suppose that $W \subseteq V$ is a winning coalition for $f$. Then, for every $\sigma \in \Aut_f$, it must be the case that $\sigma(W) \cap W \neq \emptyset$ (otherwise, $f$ would not be well-defined). Hence, it follows from Lemma~\ref{prop:lower-bound-in-group-action-setting} that $|W| \geq \sqrt{n}$.
\end{proof}

\subsection{Proof of Theorem~\ref{thm:college}}

\begin{proof}[Proof of Theorem~\ref{thm:college}]
Define $C(n)$ to be the smallest size of any winning coalition in any generalized representative democracy rule for $n$ voters. We want to show that $C(n) \ge n^{\log_3 2}$. 

If $n = 1$, a winning coalition must be of size $1$, which is $\ge 1^{\log_3 2}$. 

If $n > 1$, any generalized voting rule $f$ is of the form $f(\phi) = \maj(f_1(\phi |_{V_1}), f_2(\phi |_{V_2}),\dots,f_d(\phi |_{V_d}))$.
Because the voting rule is equitable, the functions $f_i$ are all isomorphic, and so have minimal winning coalitions of the same size. A minimal winning coalition for $f$ would then need to include a strict majority of these, which is of size at least $\frac{d+1}{2}$. Therefore,\footnote{Here and below $d|n$ denotes that $d$ is a divisor of $n$.} 
\begin{align}
  \label{eq:Cn}
  C(n) \geq \min_{d|n} \frac{d+1}{2} \cdot C\left(n/d\right).
\end{align}

Assume by induction that $C(m) \ge m^{\log_3 2}$ for all $m < n$. Then for $d|n$, 
\begin{align*}
\frac{d+1}{2}\cdot C\left(n/d\right) \ge \frac{d+1}{2} \cdot \left(\frac{n}{d}\right)^{\log_3 2} = n^{\log_3 2}\cdot\frac{d+1}{2} \cdot d^{-\log_3 2}.
\end{align*}
Denote $h(d)= \frac{d+1}{2} \cdot d^{-\log_3 2}$, so that
$$
\frac{d+1}{2}C(n/d)\geq n^{\log_3 2}h(d).
$$
Note that $h(d) \geq 1$. To see this, observe that  $h(3) = 1$, and 
$$h'(d) = \frac{d^{-\log 6 / \log 3}(d \log \frac{3}{2} - \log 2)}{2 \log 3} > 0$$ 
for $d \ge 3$, and so $h(d) \ge 1$ for $d \ge 3$.

We have thus shown that
$$
\frac{d+1}{2}C(n/d)\geq n^{\log_3 2},
$$
and so by \eqref{eq:Cn}, $C(n) \geq n^{\log_3 2}$.

To see that $C(n) = n^{\log_3 2}$ for arbitrarily large $n$, consider the following GRD rule (see Figure~\ref{fig:college}).
Take $n$ to be a power of $3$, and let $f$ be defined recursively by partitioning at each level into three sets $\{V_1,V_2,V_3\}$ of equal size. A simple calculation shows that the winning coalition recursively consisting of the winning coalitions of any two of $\{V_1,V_2,V_3\}$ (e.g., $V_1$ and $V_3$, as in Figure~\ref{fig:college}), is of size $n^{\log_3 2}$.
\end{proof}
The construction of small winning coalitions in the last part of the proof mimics the construction of the Cantor set.

\subsection{Proof of Theorems \ref{thm:2-eq-neg} and \ref{thm:6-transitive}}
\label{sec:proofs-thm26}
The group of all even permutations is called the {\em alternating group} and is denoted $A_n$.
\begin{lemma}
\label{lem:alternating-or-symmetric-then-big-winning-coalitions}
Let $f$ be a voting rule for $n$ voters. If $\Aut_f$ is either $S_n$ or $A_n$ then every winning coalition for $f$ has size at least $n/2$.
\end{lemma}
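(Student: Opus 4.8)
The plan is to prove the contrapositive in spirit: show that if $\Aut_f$ is all of $S_n$ or all of $A_n$, then any set $W$ that avoids being a winning coalition candidate must be large, or equivalently that any winning coalition $W$ must satisfy $|W| \ge n/2$. The key structural fact I would exploit is the same non-disjointness principle used throughout the paper: if $W$ is a winning coalition, then for every $\sigma \in \Aut_f$ we must have $\sigma(W) \cap W \neq \emptyset$, since otherwise we could construct a profile in which $W$ votes unanimously for $1$ and $\sigma(W)$ votes unanimously for $-1$, contradicting that both are winning coalitions (and that $f$ is well-defined). So the whole proof reduces to a combinatorial statement about the symmetric and alternating groups: if a group $G \in \{S_n, A_n\}$ acts on $V$ and $S \subseteq V$ meets all of its translates $gS$, then $|S| \ge n/2$.

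First I would handle the $S_n$ case, which is the cleaner one. Suppose $|W| < n/2$, so that $|W| \le \lfloor (n-1)/2 \rfloor$ and hence the complement $V \setminus W$ satisfies $|V \setminus W| \ge |W|$ with strict room to spare; in fact $|V \setminus W| > |W|$ forces $n - |W| \ge |W| + 1$, giving $|V \setminus W| \ge |W|$. Since $S_n$ acts on all subsets of a given cardinality transitively, and since $|V \setminus W| \ge |W|$, I can find a permutation $\sigma \in S_n$ carrying $W$ entirely inside $V \setminus W$, i.e. $\sigma(W) \subseteq V \setminus W$, so that $\sigma(W) \cap W = \emptyset$. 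This contradicts the non-disjointness requirement above. Concretely, if $|W| = m < n/2$ then $n - m > m$, so there are at least $m$ elements outside $W$, and any injection of $W$ into $V \setminus W$ extends to a permutation; I would just name such a $\sigma$ explicitly. Hence $|W| \ge n/2$.

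The $A_n$ case is the part I expect to require the most care, because the extension of a bijection $W \to V\setminus W$ to a full permutation need not be \emph{even}. The fix is to use the extra freedom in the elements of $V$ not involved in the disjointness: when $|W| = m < n/2$ there are $n - m > m$ elements outside $W$, so after mapping $W$ injectively into $V \setminus W$ there remain at least $n - 2m \ge 1$ spare elements outside $W$ that $\sigma$ can permute freely. If the natural permutation $\sigma$ I built happens to be odd, I can compose it with a transposition of two of these spare ``free'' coordinates; this toggles the parity to even while preserving $\sigma(W) \subseteq V \setminus W$ (the transposition acts only on elements outside both $W$ and its image). I would need to check there genuinely are two free coordinates available to transpose, which holds as long as $n - 2m \ge 2$, i.e. $m \le (n-2)/2$; the boundary case $m = (n-1)/2$ (when $n$ is odd) still satisfies $|W| = (n-1)/2 < n/2$ but may leave only one spare coordinate, so I would treat that edge case separately, noting that there $|W| < n/2$ is the conclusion we want anyway, or argue directly that a disjoint even-translate still exists using a $3$-cycle on available points. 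The main obstacle, then, is precisely this parity bookkeeping for $A_n$; once it is dispatched, both cases deliver $\sigma(W) \cap W = \emptyset$ with $\sigma \in \Aut_f$, contradicting that $W$ is a winning coalition, and the bound $|W| \ge n/2$ follows.
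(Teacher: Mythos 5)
Your proposal takes the same route as the paper: use the fact that a winning coalition $W$ must meet $\sigma(W)$ for every $\sigma \in \Aut_f$, then, assuming $|W| = m < n/2$, exhibit an \emph{even} permutation carrying $W$ into its complement to reach a contradiction. (Once the $A_n$ argument is in place, your separate $S_n$ case is redundant, since any even permutation already lies in $S_n$.) The one place where your write-up has a genuine problem is the edge case $n$ odd, $m = (n-1)/2$, where $n - 2m = 1$ and your parity-fixing transposition of two ``spare'' points does not exist. Your first proposed resolution---``noting that there $|W| < n/2$ is the conclusion we want anyway''---is backwards: the conclusion to be proved is $|W| \ge n/2$, so a coalition of size $(n-1)/2$ is precisely a potential counterexample that the argument must refute, not a case that can be dismissed. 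Your second suggestion (a $3$-cycle through the single spare point) can be made to work: take $\sigma$ to be the product of the $m$ disjoint transpositions $(i\ \sigma(i))$, and if $m$ is odd replace one of them by the $3$-cycle $(i\ \sigma(i)\ z)$ with $z$ the spare point; this flips the parity count while still mapping every element of $W$ outside $W$. But as written it is only a gesture, and without it the proof is incomplete.

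The cleaner repair---and what the paper actually does---dissolves the edge case entirely: the parity-fixing transposition does not need to move only points outside $W \cup \sigma(W)$; it only needs to preserve $W$ setwise, so that right-composition leaves the image $\sigma(W)$ unchanged. So you may transpose two elements \emph{of} $W$ (the paper labels $W = \{1,\dots,k\}$, takes the reversal $\pi(i) = n+1-i$, and composes with the swap of $1$ and $2$ when the reversal is odd), or two elements of $V \setminus W$, which has size $n - m \ge m+1 \ge 2$. Either choice flips parity, keeps $\sigma(W) \cap W = \emptyset$, and works uniformly for all $m < n/2$ with no case analysis.
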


\begin{proof}
Suppose $W \subseteq V$ is a winning coalition for $f$ with $|W| = k < n/2$. Label the voters $V$ with labels $1, \dots, n$ such that $W = \{1, \dots, k\}$, and let $\pi$ be the permutation of $V$ given by $\pi(i) = n + 1 - i$ for $i = 1, \dots, n$. If $\lfloor n/2 \rfloor$ is odd, let $\pi$ be the map above composed with the map that exchanges $1$ and $2$. It follows that $\pi$ is in the alternating group, and hence $\pi \in \Aut_f$. However, $\pi(W) \cap W = \emptyset$ since $k < n + 1 - k$, contradicting the assumption $W$ is a winning coalition. 
\end{proof}

\begin{proof}[Proof of Theorem~\ref{thm:2-eq-neg}]
Denote by $\eta(n)$ the number of positive integers $m \leq n$ for which there is no $2$-transitive group action on a set of $m$ elements except for $S_m$ and $A_m$. It follows from the main theorem in \cite{cameron1982} that $n - \eta(n)$ is at most $3n/\log(n)$ for all $n$ large enough. Since
\begin{align*}
    \lim_{n \rightarrow \infty}{\frac{3n/\log(n)}{n}} = 0,
\end{align*}
it follows that
\begin{align*}
    \lim_{n \rightarrow \infty}{\frac{\eta(n)}{n}} = \lim_{n \rightarrow \infty}{1 - \frac{n - \eta(n)}{n}} = 1,
\end{align*}
and so the claim follows from Lemma \ref{lem:alternating-or-symmetric-then-big-winning-coalitions}.
\end{proof}

\begin{proof}[Proof of Theorem~\ref{thm:6-transitive}]

The only $4$- or $5$-transitive finite groups aside from the alternating and symmetric groups are the Mathieu groups, with the largest action on a set of size $24$ \citep{dixon1996permutation}. Hence, for $n>24$, every $4$- or $5$-transitive voting rule must have either $S_n$ or $A_n$ as an automorphism group. Furthermore, the {\em only} $6$-transitive groups are $S_n$ or $A_n$ \citep[again, see][]{dixon1996permutation}. Hence, the result follows immediately from Lemma~\ref{lem:alternating-or-symmetric-then-big-winning-coalitions}. 
\end{proof}

\subsection{Proof of Proposition~\ref{prop:pd}}
\label{app:pd}
This proposition's proof follows directly from group theory results that are classical, but that are not covered in our primer in \S\ref{sec:primer}, and which we now review briefly. 

Let $G$ be a group. The {\em order} of $G$ is simply its size. The order of $g \in G$ is the smallest $n$ such that $g^n$ is the identity. Given a prime $p$, we say that a group $P$ is a $p$-group if the orders of all of its elements are equal to powers of $p$. We assume for the remainder of this section that $p$ is prime.

Let $G$ be a finite group with $|G|=p^d \cdot m$, where $d \geq 1$, and $m$ is not divisible by $p$. \cite{sylow1872theoremes} proved that, in this case, $G$ has a subgroup $P$ that is a $p$-group of order $p^d$. Such groups are called Sylow $p$-groups in his honor. The following lemma states an important and well-known fact \citep[see, e.g.,][Theorem~3.4']{wielandt2014finite} regarding Sylow $p$-groups.
\begin{lemma}
  \label{lem:sylow-transitive}
  Let $G$ act transitively on a set $V$ of size $n=p^d$ for some $d\geq 1$. Any Sylow $p$-subgroup of $G$ acts transitively on $V$.
\end{lemma}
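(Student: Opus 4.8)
The plan is to reduce everything to a single index computation via the Orbit--Stabilizer Theorem, exploiting that $n=p^d$ is itself a power of $p$. First I would fix a point $v_0\in V$ and pin down the order of its stabilizer. Since $G$ is transitive, the Orbit--Stabilizer Theorem gives $[G:G_{v_0}]=|V|=p^d$, so $p^d$ divides $|G|$. Writing $|G|=p^a m$ with $p\nmid m$, this forces $a\ge d$ and $|G_{v_0}|=|G|/p^d=p^{a-d}m$; in particular a Sylow $p$-subgroup of $G_{v_0}$ has order exactly $p^{a-d}$.

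The key construction is to build a Sylow $p$-subgroup of $G$ from the bottom up rather than to start from an arbitrary one. I would pick a Sylow $p$-subgroup $Q$ of the stabilizer $G_{v_0}$, so $|Q|=p^{a-d}$, and then (using the existence part of Sylow's theorem, since $Q$ is a $p$-subgroup of $G$) extend it to a Sylow $p$-subgroup $P$ of $G$ with $Q\le P$ and $|P|=p^a$. Because $Q\le P$ and $Q\le G_{v_0}$, we have $Q\le P\cap G_{v_0}$; but $P\cap G_{v_0}$ is a $p$-subgroup of $G_{v_0}$, so its order is at most $p^{a-d}=|Q|$. Hence $P\cap G_{v_0}=Q$ and $|P\cap G_{v_0}|=p^{a-d}$. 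The stabilizer of $v_0$ inside $P$ is precisely $P\cap G_{v_0}$, so Orbit--Stabilizer gives that the $P$-orbit of $v_0$ has size $[P:P\cap G_{v_0}]=p^a/p^{a-d}=p^d=|V|$. Thus this particular $P$ is transitive.

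Finally I would pass from ``some'' Sylow subgroup to ``every'' one. By the conjugacy part of Sylow's theorems, all Sylow $p$-subgroups of $G$ are conjugate, so an arbitrary one has the form $gPg^{-1}$ for some $g\in G$. Conjugation preserves transitivity: for $w\in V$ the $gPg^{-1}$-orbit of $w$ equals $g\cdot\bigl(P\cdot(g^{-1}w)\bigr)=g\cdot V=V$, since $P$ is transitive and $g$ acts as a bijection of $V$. Therefore every Sylow $p$-subgroup acts transitively.

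The step that requires the right idea is that for a \emph{general} Sylow $p$-subgroup $P'$ one cannot simply assert that $P'\cap G_{v_0}$ is a full Sylow $p$-subgroup of $G_{v_0}$ (which is what would be needed to make the orbit through $v_0$ have size $p^d$ directly). The construction above circumvents this by choosing the Sylow subgroup so that it contains a Sylow subgroup of the stabilizer, which pins the intersection to order $p^{a-d}$, and then transferring the conclusion to all Sylow subgroups via conjugacy.
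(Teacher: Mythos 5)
Your proof is correct, but it takes a genuinely different route from the paper's. The paper handles an \emph{arbitrary} Sylow $p$-subgroup $P$ directly: writing $|G|=p^{d+\ell}m$ with $p\nmid m$, for any $i\in V$ the orbit $Pi$ has size $p^{d-b}$ for some $b\ge 0$, so $|P_i|=p^{\ell+b}$ while $|G_i|=p^{\ell}m$; since $P_i\le G_i$, Lagrange forces $p^{\ell+b}\mid p^{\ell}m$, hence $b=0$ and the orbit is all of $V$. That argument needs only Lagrange and Orbit--Stabilizer, and in particular never invokes Sylow conjugacy or the fact that every $p$-subgroup extends to a Sylow subgroup, both of which your proof uses. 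What your construction buys is a concrete picture (a transitive Sylow subgroup built on top of a Sylow subgroup of a point stabilizer), but it is strictly more machinery, and the conjugation step is extra work the direct argument avoids. Also, your closing remark overstates the difficulty: for a general Sylow $p$-subgroup $P'$ you do not need $P'\cap G_{v_0}$ to be a full Sylow subgroup of $G_{v_0}$. In your notation $|G|=p^a m$ and $|G_{v_0}|=p^{a-d}m$; since $P'\cap G_{v_0}$ is a $p$-subgroup of $G_{v_0}$, Lagrange already gives $|P'\cap G_{v_0}|\le p^{a-d}$, so $[P':P'\cap G_{v_0}]\ge p^d$, and since the orbit of $v_0$ under $P'$ is contained in $V$, it has size exactly $p^d$. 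So the direct argument goes through for every Sylow subgroup at once --- which is essentially what the paper does.
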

\begin{proof}
  Since the order of $G$ is divisible by the size of $V$, $|G| = p^{d+\ell} \cdot m$ for some $\ell \geq 0$ and $m$ not divisible by $p$. Let $P$ be a Sylow $p$-subgroup, so that $|P| =  p^{d+\ell}$.

  For any $i \in V$, the size of the $P$-orbit $P i$ divides $|P|=p^{d+\ell}$, and so is equal to $p^{d-a}$ for some $a \geq 0$. Now, the size of the stabilizers $P_i$ and $G_i$ is $|P_i| = |P|/|Pi| = p^{\ell+a}$ and $|G_i| = p^\ell \cdot m$.  Since $P_i$ is a subgroup of $G_i$,  $|P_i|$ divides $|G_i|$, and so $a=0$, $|P i| = p^d = n$, and $P$ acts
  transitively on $V$.
\end{proof}
The {\em center} of a group is the collection of all of its elements that commute with all the group elements: $\{g \in G\,:\,g h=h g \text{ for all } h \in G\}$. This is easily seen to also be a subgroup of $G$. 
\begin{lemma}
\label{lem:p-group-center}
Every non-trivial $p$-group has a non-trivial center.
\end{lemma}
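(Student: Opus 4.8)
The plan is to prove the classical fact that every non-trivial $p$-group $P$ has a non-trivial center by exploiting the \emph{class equation}, which partitions $P$ according to the action of $P$ on itself by conjugation. First I would let $P$ act on itself via $g \cdot h = g h g^{-1}$, and observe that the orbits of this action are exactly the conjugacy classes of $P$. An element $h$ lies in an orbit of size $1$ precisely when $g h g^{-1} = h$ for all $g$, that is, precisely when $h$ is in the center $Z(P)$. So the singleton orbits are in bijection with the elements of $Z(P)$.

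Next I would apply the Orbit-Stabilizer Theorem (already available from the primer in \S\ref{sec:primer}): for each $h \in P$, the size of its conjugacy class equals $|P|$ divided by the order of its stabilizer (the centralizer of $h$), and in particular each orbit size divides $|P| = p^d$. Hence every non-singleton orbit has size a positive power of $p$, and so is divisible by $p$. Writing $P$ as the disjoint union of its conjugacy classes gives the class equation
\begin{align*}
  |P| = |Z(P)| + \sum_{i} |C_i|,
\end{align*}
where the sum ranges over the conjugacy classes $C_i$ of size greater than $1$.

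Finally I would argue by divisibility. Since $P$ is non-trivial, $|P| = p^d$ with $d \ge 1$, so $p$ divides $|P|$. Each term $|C_i|$ in the sum is divisible by $p$, so $p$ divides $\sum_i |C_i|$ as well. Therefore $p$ must divide $|Z(P)| = |P| - \sum_i |C_i|$. But the identity element always commutes with everything, so $Z(P)$ is non-empty and $|Z(P)| \ge 1$; combined with $p \mid |Z(P)|$, this forces $|Z(P)| \ge p > 1$. Hence the center is non-trivial.

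The only real subtlety — and the step I would be most careful about — is making sure the orbit-counting is set up correctly: namely that the fixed points of the conjugation action are exactly the central elements, and that the remaining orbits all have size divisible by $p$. Everything else is a direct application of the Orbit-Stabilizer Theorem and elementary divisibility, so there is no genuine computational obstacle; the proof is short once the class equation is in place.
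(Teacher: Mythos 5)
Your proof is correct, and it is the standard class-equation argument: the fixed points of the conjugation action are exactly the central elements, every non-singleton conjugacy class has size a positive power of $p$ by the Orbit-Stabilizer Theorem, and divisibility then forces $p$ to divide $|Z(P)|$, so $|Z(P)| \geq p$. It is worth noting that the paper does not actually prove Lemma~\ref{lem:p-group-center} at all; it simply cites Theorem~6.5 in \cite{Lang2002}. So your argument is less a different route than a filling-in of that citation --- the proof behind it is precisely this class-equation computation, and your write-up correctly isolates the two points that make it work (fixed points $=$ center, non-singleton orbits divisible by $p$). One small point deserves care: the paper defines a $p$-group by the condition that every \emph{element} has order a power of $p$, whereas your divisibility argument starts from $|P| = p^d$. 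These are equivalent for finite groups, but the equivalence is Cauchy's theorem (a prime $q$ dividing $|P|$ would produce an element of order $q$, which must then equal $p$), which is not in the paper's primer. You can either invoke Cauchy's theorem explicitly, or sidestep the issue by observing that in this paper the lemma is only ever applied to Sylow $p$-subgroups and their quotients, whose orders are powers of $p$ by construction, so your hypothesis holds in every case where the lemma is used.
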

For a proof, see Theorem 6.5 in \cite{Lang2002}. Here and below, a non-trivial group is a group of order larger than 1. 

Let $G$ be a group that acts transitively on a set $V$, and let $Z$ be the center of $G$. Denote by $G/Z$ the set of left cosets of $Z$, and let $\hat V$ be the set of $Z$-orbits of $V$. If $v, w \in V$ are in the same $Z$-orbit, then $g(v)$ and $g(w)$ are also in the same $Z$-orbit, since if $z(v) = w$ then $z(g(v)) = g(z(v)) = g(w)$. Hence, each $g \in G$ induces a permutation on $\hat V$. Note that $g,h \in G$ induce the same permutation on $\hat V$ if they are in the same element of $G/Z$. Hence, we can think of $G/Z$ as a group of permutations of $\hat V$. This group must act transitively on $\hat V$ since $G$ acts transitively on $V$. Furthermore, if a subgroup of $G/Z$ acts transitively on $\hat V$, then the union of the cosets it includes is a subgroup of $G$ that acts transitively on $V$.
 

\begin{lemma}
\label{lem:transitive-p-group}
Let $G$ be a group acting transitively on a set $V$ of size $p^d$, for some $d\geq 1$. There exists a $p$-subgroup $R$ of $G$ of size $p^d$ acting transitively on $V$ with trivial stabilizers.
\end{lemma}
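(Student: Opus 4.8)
The plan is to reduce first to the case where $G$ itself is a $p$-group, and then to build the regular subgroup by induction on the order of the acting group. The reduction is immediate from Lemma~\ref{lem:sylow-transitive}: a Sylow $p$-subgroup $P \le G$ already acts transitively on $V$, and any regular $p$-subgroup of $P$ is one for $G$, so it suffices to take $G = P$ a $p$-group. I would also note at the outset that the trivial-stabilizer conclusion is free: if $R \le P$ has $|R| = p^d$ and acts transitively, then by the Orbit--Stabilizer Theorem $|R_v| = |R|/|Rv| = p^d/p^d = 1$. So the whole task is to exhibit a transitive subgroup of order exactly $p^d$.

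Now I would induct on $|P|$. If $|P| = p^d$ then $P$ is already regular and $R = P$. Otherwise $|P| > p^d$, and by Lemma~\ref{lem:p-group-center} the center is nontrivial, so it contains an element $z$ of order $p$. Since $z$ is central and $P$ is transitive, $\langle z\rangle$ is semiregular (if $z$ fixed a point it would fix the whole orbit $Pv = V$, so be trivial), and thus the orbits of $\langle z\rangle$ form a block system $\Sigma$ of $p^{d-1}$ blocks of size $p$. The image of $P$ in its action on $\Sigma$ is a transitive $p$-group of smaller order, so by the induction hypothesis it has a regular subgroup $\bar R$. By the same coset argument used in the center-reduction discussion preceding the lemma (now applied to the normal subgroup $\langle z\rangle$), I would lift $\bar R$ to its full preimage $H \le P$ under the block action; since $H$ contains $z$ (which cycles each block) and maps onto $\bar R$ (transitive on $\Sigma$), $H$ is transitive on $V$. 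If $H$ is a \emph{proper} subgroup, then $H$ is a transitive $p$-group of smaller order and the induction hypothesis applied to $H$ delivers the desired $R \le H \le P$.

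The remaining, and genuinely hard, case is $H = P$, which occurs exactly when the induced action of $P$ on $\Sigma$ is \emph{itself} regular. Then every point stabilizer of $P$ lies in the kernel $K$ of the block action, and $K$ is large, with $|K| = |P|/p^{d-1} > p$. The structural facts I would exploit are that $K$ is elementary abelian---each element of $K$ induces a $p$-cycle or the identity on each block of size $p$, so $K$ embeds into a product of copies of $\mathbb{Z}/p$---and that, using $z$ to coordinatize the blocks, $K$ is a submodule of the permutation module $\mathbb{F}_p[\Sigma]$ for $P/K$, with $z$ the $P$-fixed ``norm'' vector. Producing a regular subgroup of order $p^d$ now amounts to finding $R$ with $RK = P$ and $R \cap K = \langle z\rangle$, equivalently a complement to $K/\langle z\rangle$ in $P/\langle z\rangle$. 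I expect this to be the main obstacle. When $K \not\le \langle z\rangle\,\Phi(P)$ it is easy: one takes $R$ to be a maximal subgroup containing $z$ but not all of $K$, so that $RK = P$ and $R$ is a \emph{proper} transitive subgroup, finishing by induction. But in the complementary configuration $z$ sits inside the Frattini subgroup, no such maximal subgroup exists, and one must split the module extension directly, using that $z$ spans the unique minimal submodule of $K$. This is the crux: here the elementary-abelian module structure of $K$, rather than the soft orbit-counting used elsewhere in the paper, carries the argument.
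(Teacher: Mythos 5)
Your skeleton is sound and, up to the final case, closely tracks the paper's own argument: reduce to a Sylow $p$-subgroup $P$ via Lemma~\ref{lem:sylow-transitive}, note that trivial stabilizers are automatic once a transitive subgroup has order $p^d$, use Lemma~\ref{lem:p-group-center} to obtain a central element, pass to the block system of its orbits, obtain a regular subgroup downstairs by induction, and lift. (The paper quotients by the full center $Z$ and inducts on degree, while you quotient by $\langle z\rangle$ and induct on $|P|$; that difference is cosmetic.) The genuine gap is exactly where you place it: the case in which the image of $P$ on the block set $\Sigma$ is itself regular is not proved, only a hoped-for splitting of $1\to K/\langle z\rangle\to P/\langle z\rangle\to P/K\to 1$ is sketched.

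That gap cannot be closed, because the lemma as stated is false in general (already for $p=2$, $d=3$), and it fails precisely in your hard case. Take $Q=\{1,x,y,xy\}\cong\mathbb{Z}_2\times\mathbb{Z}_2$ and let $\mathbb{F}_2^Q\rtimes Q$ act on the eight points $Q\times\{0,1\}$, with $f\in\mathbb{F}_2^Q$ flipping inside blocks and $Q$ permuting the blocks regularly. Let $\chi\colon Q\to\mathbb{F}_2$ be the homomorphism with $\chi(x)=\chi(xy)=1$, $\chi(y)=0$, write $\mathrm{wt}(f)=\sum_{r\in Q}f(r)$, and set $P=\{(f,q):\mathrm{wt}(f)=\chi(q)\}$. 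Then $P$ is transitive of order $32$, its block kernel $K=P\cap\mathbb{F}_2^Q$ (the even-weight vectors) is elementary abelian of order $8$, and $P/K\cong Q$ is regular on $\Sigma$: your case $H=P$, with $K\subseteq\Phi(P)$, i.e.\ your genuinely hard configuration. Suppose $R\le P$ were transitive of order $8$. Its image in $Q$ must be all of $Q$, so $R\cap\mathbb{F}_2^Q=\{0,g\}$ for a single nonzero even vector $g$, and $R$ contains elements $(f,x)$ and $(k,xy)$. Now $(f,x)^2=(f+x{\cdot}f,1)$, and $f+x{\cdot}f$ is constant equal to $f(1)+f(x)$ on $\{1,x\}$ and constant equal to $f(y)+f(xy)$ on $\{y,xy\}$, the two values summing to $\mathrm{wt}(f)=1$; so it is the indicator of $\{1,x\}$ or of $\{y,xy\}$. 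Likewise $(k,xy)^2=(k+xy{\cdot}k,1)$ is the indicator of $\{1,xy\}$ or of $\{x,y\}$. Both squares are nonzero elements of $\{0,g\}$, hence both equal $g$; but the four indicator vectors are pairwise distinct. So $P$ contains no regular subgroup, and the extension you hoped to split indeed does not split.

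Two remarks to put this in context. First, your instinct that this case is "the crux" is right in the strongest sense; no module-theoretic argument can produce the splitting in general. Second, the paper's own proof is not a safe fallback here: its induction step treats $P/Z$ as a group of permutations of $\hat V$, which tacitly assumes the action of $P/Z$ on the $Z$-orbits is faithful. When the kernel of the block action strictly contains $Z$ --- again exactly your hard case, and exactly what happens in the example above --- the inductive hypothesis is invoked for an action where the conclusion is false, so the paper's proof has the same gap in hidden form. The statement is true, and is all the paper actually uses (Proposition~\ref{prop:pd}), for $d\le 2$: there your hard case closes directly, since any $g\notin K$ either has order $p^2$, in which case $\langle g\rangle$ is regular, or order $p$, in which case $\langle g,z\rangle$ is regular. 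A correct version of the lemma must be restricted to $d\le2$ (or otherwise weakened), and your write-up should say so rather than promise a splitting argument.
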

\begin{proof}
Let $P$ denote a Sylow $p$-subgroup of $G$. By Lemma~\ref{lem:sylow-transitive}, the action of $P$ on $V$ is also transitive.

Let $Z$ denote the center of $P$. Since $P$ is non-trivial, by Lemma~\ref{lem:p-group-center}, $Z$ is non-trivial. We claim that the $Z$ action on $V$ has trivial stabilizers. To see this, assume that $h(v) = v$ for some $v$, and choose any $w \in V$. Since $P$ acts transitively, there is some $g \in P$ such that $g(v)=w$. Since $h$ commutes with $g$, 
$$
  h(w) = h(g(v)) = g(h(v)) = g(v) = w,
$$
and so, since $w$ was arbitrary, $h$ is the identity. Hence, $Z_v = \{e\}$ for every $v \in V$. Note that by the Orbit-Stabilizer Theorem, this implies that each $Z$ orbit is equal in size to $Z$.

If the action of $Z$ is also transitive, we are done, since we can take $R=Z$.

Finally, consider the case that $Z$ does not act transitively.

In this case $P' = P/Z$ acts transitively on $\hat V$, the set of the $Z$-orbits of $V$. By induction, $P'$ has a subgroup $Z'$ which acts transitively with trivial stabilizers on $\hat V$, and hence has size $|\hat V|$. Note that since the action of $Z$ has trivial stabilizers, every $Z$-orbit has size $|Z|$, and so $|\hat V|= |V| / |Z| = p^d / |Z|$. Thus, taking $R$ to be the union of the cosets in $P/Z$ that comprise $Z'$, $R$ acts transitively on $V$. Finally, since this subgroup has size $|Z'| \cdot |Z| = p^d = |V|$, it follows that the action of $R$ has trivial stabilizers.




\end{proof}

A group is said to be {\em abelian}  if all of its elements commute: $g h = h g$ for all $g,h \in G$. Note that the center of every group is abelian by definition. The structure of abelian groups is simple and well understood: Kronecker's Theorem \citep[][Theorem 5.2.2]{kronecker1870auseinandersetzung,stillwell2012classical} states that every abelian group is a product of cycles of prime powers. That is, if $G$ is an abelian group of permutations of a set $V$---and assuming without loss of generality that no element of $V$ is fixed by all elements of $G$---then there is a way to identify $V$ with $\prod_{i=1}^m\{0,\ldots,n_i-1\}$, with each $n_i$ a prime power, so that $G$ is generated by permutations\footnote{A group is said to be generated by a set $S$ of permutations if it includes precisely those permutations that can be constructed by composing permutations in $S$.} of the form 
$$
\sigma_k(i_1,\ldots,i_m) = (i_1,\ldots,i_{k-1},i_k+1 \text{ mod } n_k,i_{k+1},\ldots,i_m).
$$

As is also well known, every group of order $p$ or $p^2$ is abelian \cite[page 148]{netto1892theory}. From these facts follows the next lemma.
\begin{lemma}
\label{lem:abelian}
Let $R$ be a group of order $p$ or $p^2$, acting transitively on a set $V$. In the former case, we can identify $V$ with $\{0,\ldots,p-1\}$ so that $R$ is generated by 
$$
  \sigma(i) = i+1 \text{ mod } p.
$$ 
In the latter case, we can either identify $V$ with $\{0,\ldots,p^2-1\}$ so that  $R$ is generated by 
$$
  \sigma(i) = i+1\text{ mod } p^2,
$$
or else we can identify $V$ with $\{0,\ldots,p-1\}^2$, so that $R$ is generated by 
\begin{align*}
  \sigma_1(i_1,i_2) = (i_1+1 \text{ mod } p,i_2)\\
  \sigma_1(i_1,i_2) = (i_1,i_2+1\text{ mod } p).
\end{align*}
\end{lemma}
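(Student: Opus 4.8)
The plan is to deduce the lemma from the two structural facts recalled immediately beforehand: that every group of order $p$ or $p^2$ is abelian, and Kronecker's Theorem, which expresses an abelian permutation group (with no global fixed point) as a product of prime-power cyclic shifts. Since $R$ has order $p$ or $p^2$, it is abelian, so both facts apply.

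First I would show that the action of $R$ on $V$ is regular, that is, that $|V| = |R|$ and all point stabilizers are trivial. Because $R$ is abelian, conjugation within $R$ is trivial, so all stabilizers coincide: for every $g \in R$ we have $R_{g(v)} = g R_v g^{-1} = R_v$. As the action is transitive, $g(v)$ ranges over all of $V$, so this common stabilizer equals $\bigcap_{w \in V} R_w$, which is exactly the set of elements of $R$ fixing every point of $V$; since $R$ is a group of permutations of $V$, this set is $\{e\}$. By the Orbit–Stabilizer Theorem, $|V| = |R|/|R_v| = |R|$. Transitivity together with $|V| = |R| > 1$ also rules out any point of $V$ fixed by all of $R$, so the hypothesis of Kronecker's Theorem is satisfied.

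Next I would apply Kronecker's Theorem to identify $V$ with $\prod_{i=1}^m \{0,\ldots,n_i-1\}$, each $n_i$ a prime power, in such a way that $R$ is generated by the coordinate shifts $\sigma_k$; in particular $\prod_i n_i = |V| = |R|$. Since $R$ is a group of order $p$ or $p^2$, it is a $p$-group, so every $n_i$ is a power of $p$, and the remaining step is the finite enumeration of the ways to write $|R|$ as such a product. If $|R| = p$, the only possibility is a single factor $n_1 = p$, yielding $V = \{0,\ldots,p-1\}$ with the single generator $\sigma(i) = i + 1 \bmod p$. If $|R| = p^2$, the only possibilities are a single factor $n_1 = p^2$, giving the cyclic shift on $\{0,\ldots,p^2-1\}$, or two factors $n_1 = n_2 = p$, giving the two grid shifts on $\{0,\ldots,p-1\}^2$. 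Matching each case against the generator forms in Kronecker's statement produces exactly the claimed descriptions.

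The argument is essentially bookkeeping once the cited results are in hand; the only point requiring genuine care is the first step, establishing that a transitive abelian permutation action is regular, so that $|V| = |R|$ and Kronecker's decomposition has the right total size. Beyond verifying that the prime-power factorization of $|R|$ admits only the two listed shapes for $p^2$ (and the unique shape for $p$), I do not anticipate any real obstacle.
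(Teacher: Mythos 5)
Your proof is correct and follows essentially the same route the paper intends: order $p$ or $p^2$ implies abelian, then Kronecker's decomposition, then enumeration of the prime-power factorizations. The paper leaves this as an immediate consequence of the two cited facts; your explicit verification that a transitive abelian action is regular (so $|V|=|R|$ and the product in Kronecker's Theorem has the right size) is precisely the detail needed to make that application airtight.
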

Hence, if $R$ is a subgroup of the automorphism group of a voting rule $f$, then this rule is  cyclic if $n=p$, and is either cyclic, $2$-cyclic or both if $n=p^2$.

\begin{proof}[Proof of Proposition~\ref{prop:pd}]
By Lemma~\ref{lem:transitive-p-group}, $\Aut_f$ has a $p$-subgroup $R$ of  order $n$ that acts transitively on $V$. The claim now follows immediately from Lemma~\ref{lem:abelian}.
\end{proof}
When $n=p^d$, with $d>2$, this proof fails since the group $R$ is no longer necessarily abelian. Non-abelian groups do not have cyclic structure, and thus voting rules for which this group $R$ is not abelian will not be cyclic, $2$-cyclic, or higher-dimensional cyclic. We conjecture that such equitable voting rules do indeed exist.



\section{$\mathbf{2}$-Equitable and $\mathbf{3}$-Equitable Rules}
\label{sec:2-3-equitable}
In this section we construct $2$-equitable and $3$-equitable rules with small winning coalitions that apply to arbitrarily large population sizes. This construction is rather technically involved and uses finite vector spaces. To glean some intuition, we first explain an analogous construction using standard vector spaces and assuming a continuum of voters. 


Suppose voters are identified with the set of one-dimensional subspaces of $\mathbb{R}^3$: i.e., each voter is identified with a line that passes through the origin. Now suppose winning coalitions are the two-dimensional subspaces: if all voters on a plane agree, that is the election outcome, otherwise the election is undecided.\footnote{This rule is well defined since every pair of two-dimensional subspaces intersects, and so no two winning coalitions are disjoint.} Clearly, the winning coalitions are much smaller than the electorate (or indeed of ``half of the voters'') in the sense that they have a smaller~dimension.

Invertible linear transformations of $\mathbb{R}^3$ permute the one-dimensional subspaces, and the two-dimensional subspaces, and so constitute automorphisms of this voting rule. Equity follows since for any two non-zero vectors $v$ and $u$, we can find some invertible linear transformation that maps $v$ to $u$. Moreover, the voting rule is also $2$-equitable---given a pair of distinct voters $(v_1,v_2)$, and given another such pair $(u_1,u_2)$, we can find some invertible linear transformation that maps the former to the latter. Thus, \textit{every pair} of voters plays the same role.

In Theorem~\ref{thm:2-transitive} below we construct $2$-equitable voting rules for finite sets of voters, using finite vector spaces instead of $\mathbb{R}^3$. Figure~\ref{fig:pp} shows a $2$-equitable voting rule constructed in this way, for $7$ voters. In the figure, every three co-linear nodes form a winning coalition, as well as the three nodes on the circle.\footnote{Figure~\ref{fig:pp} depicts what is commonly referred to as a Fano plane in finite geometry. It is the finite projective plane of order 2.} In this construction, the size of the winning coalition is exactly $\sqrt{n}$ (rounded up to the nearest integer), which matches the lower bound of $\sqrt{n}$ in  Theorem~\ref{thm:transitive-lower}.

\begin{figure}
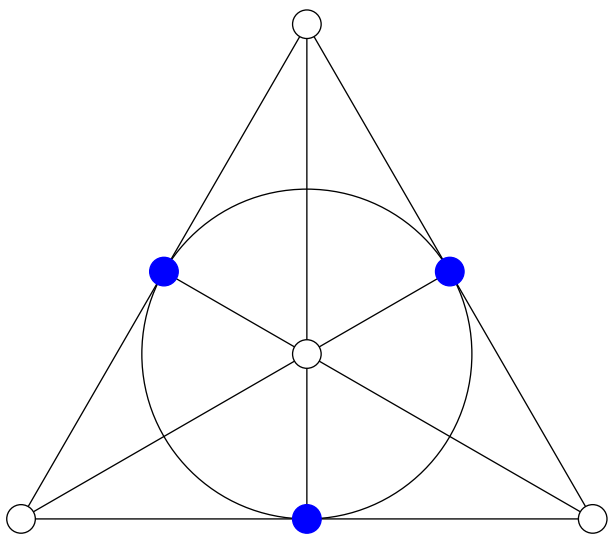

\begin{center}
\begin{asy}
size(8cm,0);
real circlesize = 0.05;  

pair p001 = (0,0);
pair p010 = (2,0);
pair p100 = (1,sqrt(3));
pair p011 = (1,0);
pair p101 = (1/2,sqrt(3)/2);
pair p110 = (3/2,sqrt(3)/2);
pair p111 = (1,sqrt(3)/3);

draw(p001 -- p010 -- p100 -- p001);
draw(p001 -- p110);
draw(p010 -- p101);
draw(p100 -- p011);
draw(circle(p111,sqrt(3)/3));

fill(circle(p001,circlesize), white); draw(circle(p001,circlesize));
fill(circle(p010,circlesize), white); draw(circle(p010,circlesize));
fill(circle(p100,circlesize), white); draw(circle(p100,circlesize));
fill(circle(p011,circlesize), blue); draw(circle(p011,circlesize), blue);
fill(circle(p101,circlesize), blue); draw(circle(p101,circlesize), blue);
fill(circle(p110,circlesize), blue); draw(circle(p110,circlesize), blue);
fill(circle(p111,circlesize), white); draw(circle(p111,circlesize));
\end{asy}
\end{center}
\caption{Every three co-linear points form a winning coalition, as well as the three points on the circle (marked in blue). This voting rule is $2$-equitable.}
\label{fig:pp}
\end{figure}

\begin{theorem}
\label{thm:2-transitive}
Let the set of voters be of size  $n = q^{2}+q+1$, for prime $q$. Then  there is a $2$-equitable voting rule with a winning coalition of size exactly equal to $\sqrt{n}$, rounded up to the nearest integer.
\end{theorem}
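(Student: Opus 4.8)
The plan is to realize the voters as the points of the finite projective plane $PG(2,q)$ over the field $\mathbb{F}_q$ and to take the winning coalitions to be its lines, making precise the finite-field analogue of the $\mathbb{R}^3$ construction described just before the statement. Concretely, identify $V$ with the set of one-dimensional subspaces of $\mathbb{F}_q^3$; since such a subspace is determined by any nonzero vector it contains, up to scaling, there are $(q^3-1)/(q-1) = q^2+q+1 = n$ of them, as required. Let $\mathcal{W}$ be the collection of \emph{lines}, i.e. the sets of one-dimensional subspaces lying in a fixed two-dimensional subspace; a two-dimensional space $\cong \mathbb{F}_q^2$ contains $(q^2-1)/(q-1)=q+1$ one-dimensional subspaces, so each line has $q+1$ points.

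First I would invoke the basic incidence property of $PG(2,q)$: any two distinct two-dimensional subspaces of $\mathbb{F}_q^3$ meet in a one-dimensional subspace, so any two lines share exactly one point. In particular no two members of $\mathcal{W}$ are disjoint, and Lemma~\ref{lem:make-voting-rule-from-winning-coalitions} then yields a neutral, positively responsive voting rule $f$ for which every line is a winning coalition. A winning coalition of the advertised size is immediate: a line has $q+1$ points, and $q^2 < q^2+q+1 < (q+1)^2$ forces $\lceil\sqrt{n}\,\rceil = q+1$.

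It remains to show that $f$ is $2$-equitable, i.e. that $\Aut_f$ acts $2$-transitively on $V$. Each invertible $T \in GL(3,q)$ permutes the one-dimensional subspaces and carries two-dimensional subspaces to two-dimensional subspaces, hence induces a permutation $\sigma_T$ of $V$ that maps lines bijectively to lines. Since the rule produced by Lemma~\ref{lem:make-voting-rule-from-winning-coalitions} returns $x\in\{-1,1\}$ exactly when some line is unanimous for $x$ and otherwise follows majority, and both ``some line is unanimous'' and the majority tally are preserved by a permutation that sends lines to lines, one checks $f(\phi^{\sigma_T})=f(\phi)$ for all $\phi$, so $\sigma_T\in\Aut_f$. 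For $2$-transitivity, take two ordered pairs of distinct points $(\langle u_1\rangle,\langle u_2\rangle)$ and $(\langle w_1\rangle,\langle w_2\rangle)$: distinctness of the points means $u_1,u_2$ and $w_1,w_2$ are each linearly independent, so each pair extends to a basis of $\mathbb{F}_q^3$, and the linear map carrying one basis to the other sends the first pair of points to the second. Thus $\{\sigma_T : T\in GL(3,q)\}\subseteq\Aut_f$ is already $2$-transitive, and $f$ is $2$-equitable.

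The main obstacle is this last step — establishing that the linear (collineation) group acts $2$-transitively on the points of $PG(2,q)$; the rest reduces to the elementary combinatorics of the projective plane together with the already-proved Lemma~\ref{lem:make-voting-rule-from-winning-coalitions}. One should also note that taking $q$ prime guarantees that $\mathbb{F}_q$ is a field, so that $PG(2,q)$ and the linear maps $\sigma_T$ are available in the first place.
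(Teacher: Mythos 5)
Your proposal is correct and takes essentially the same approach as the paper: identify the voters with the one-dimensional subspaces of $\mathbb{F}_q^3$, declare the lines of the resulting projective plane to be winning coalitions via Lemma~\ref{lem:make-voting-rule-from-winning-coalitions} (using that two distinct two-dimensional subspaces meet in a one-dimensional one), and obtain $2$-equity from the $2$-transitive action of $GL(3,q)$ on points by completing pairs of independent vectors to bases. The only difference is that you spell out verifications the paper leaves implicit, namely the count $n=(q^3-1)/(q-1)$ and the check that a permutation carrying lines to lines preserves the rule produced by the lemma.
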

More generally, a similar statement holds when $n = q^{2}+q+1$ and $q=p^k$ for some $k\geq 1$ and $p$ prime. The example in Figure~\ref{fig:pp} corresponds to the case $q=2$.
\begin{proof}
  Let $\mathbb F_q$ denote the finite field with $q$ elements.\footnote{$\mathbb F_q$ is the set $\{0,1,\ldots,q-1\}$, equipped with the operations of addition and multiplication modulo $q$. The primality of $q$ is required to make multiplication invertible.} 
  
  Given a positive integer $m$, $\mathbb F_q^m$ is a vector space, where the scalars take values in $\mathbb F_q$: it satisfies all the axioms that (say) $\mathbb R^3$ satisfies, but for scalars that are in $\mathbb F_q$ instead of $\mathbb R$. Indeed, much of the standard theory of linear algebra of $\mathbb R^m$ applies in this finite setting, and we will make use of it here. 
  
  In particular, we will make use of $GL(m,q)$, the group of invertible, $m\times m$ matrices with entries in $\mathbb F_q$. Here, again, the product of two matrices is calculated as usual, but addition and multiplication are taken modulo $q$. Since $\mathbb F_q^m$ is finite, each matrix in $GL(m,q)$ corresponds to a permutation of $\mathbb F_q^m$. As in the case of matrix multiplication on $\mathbb R^m$, these permutations preserve the $1$-dimensional and $2$-dimensional subspaces. Moreover, this group acts $2$-transitively on the $1$-dimensional subspaces, as any two non-colinear vectors $(u,v)$ can be completed to a basis of $\mathbb F_q^m$, and likewise starting from $(u',v')$; then any basis can be carried by an invertible matrix to any other basis.
  
  With this established, we are ready to identify our set of voters with the set of $1$-dimensional subspaces of $\mathbb F_q^3$. For each $2$-dimensional subspace $U$ of $\mathbb F_q^3$, define the set $S_U$ of $1$-dimensional subspaces (i.e., voters) contained in $U$. Let $\mathcal{W}$ be the collection of all such sets $S_U$, and define, using Lemma~\ref{lem:make-voting-rule-from-winning-coalitions}, a  voting rule $f$ in which the sets $S_U$ are winning coalitions.
  We need to verify that any two winning coalitions $S_U$ and $S_{U'}$ are non-disjoint. This simply follows from the fact that every pair of $2$-dimensional subspaces intersects in some $1$-dimensional subspace, and so it follows that each pair of such winning coalitions will have exactly one voter in common.\footnote{This is the reason that the winning coalitions of this rule are so small and proving a tight match to the lower bound.}
  
  A simple calculation shows that the winning coalitions are of size $q+1$. Since $\sqrt{n} \leq q + 1 \leq \sqrt{n}+1$, the claim follows.
\end{proof}

To construct $3$-equitable rules we will need the following lemma. It allows us to show, using the probabilistic method, that for small automorphism groups we can construct voting rules with small winning coalitions. This is useful for proving that there exist $3$-equitable voting rules with small winning coalitions.
\begin{lemma}
\label{thm:probabilistic-approach}
Let $G$ be a group of $m$ permutations of $\{1,\ldots,n\}$. Then there is a neutral and positively responsive voting rule $f$ such that $G$ is a subgroup of $\Aut_f$, and $f$ has winning coalitions of size at most $2 \sqrt{n} \log m + 2$.
\end{lemma}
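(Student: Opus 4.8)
The plan is to apply Lemma~\ref{lem:make-voting-rule-from-winning-coalitions}, so that the whole task reduces to producing a single set $S\subseteq V$ with $|S|\le 2\sqrt n\log m+2$ that meets all of its $G$-translates, i.e. $hS\cap S\neq\emptyset$ for every $h\in G$. Given such an $S$, set $\mathcal{W}=\{gS:g\in G\}$. Any two of its members intersect, since $gS\cap g'S\neq\emptyset$ is equivalent to $(g'^{-1}g)S\cap S\neq\emptyset$, which holds because $g'^{-1}g\in G$. Lemma~\ref{lem:make-voting-rule-from-winning-coalitions} then yields a neutral, positively responsive rule $f$ whose winning coalitions include every member of $\mathcal{W}$, and in particular $S$. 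Finally, because $\mathcal{W}$ is closed under the action of $G$ (as $G$ is a group) and majority is symmetric, every $g\in G$ merely permutes $\mathcal{W}$ and hence lies in $\Aut_f$, so $G\subseteq\Aut_f$, as required.

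To build $S$ I would split it into a fixed \emph{anchor} and a random \emph{hitter}. Fix any set $A\subseteq V$ with $|A|=\lceil\sqrt n\rceil$, and look for a set $B$ with $h(A)\cap B\neq\emptyset$ for all $h\in G$; then $S=A\cup B$ works, because $x\in h(A)\cap B$ gives $x\in B\subseteq S$ and $x=h(a)$ with $a\in A\subseteq S$, so $x\in hS\cap S$. Thus it suffices to find a small hitting set $B$ for the family $\{h(A):h\in G\}$, which consists of at most $m$ sets, each of size exactly $\lceil\sqrt n\rceil$ (since each $h$ is a bijection).

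For this I would use the probabilistic method. Sample $t$ voters $b_1,\dots,b_t$ independently and uniformly from $V$ and let $B=\{b_1,\dots,b_t\}$, so $|B|\le t$. For a fixed $h$,
\begin{align*}
  \Pr\!\left[h(A)\cap B=\emptyset\right]=\left(1-\frac{\lceil\sqrt n\rceil}{n}\right)^{t}\le e^{-t/\sqrt n},
\end{align*}
using $\lceil\sqrt n\rceil/n\ge 1/\sqrt n$. A union bound over the at most $m$ sets gives
\begin{align*}
  \Pr\!\left[\exists\,h\in G:\ h(A)\cap B=\emptyset\right]\le m\,e^{-t/\sqrt n},
\end{align*}
which is strictly below $1$ as soon as $t>\sqrt n\log m$. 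Hence for $t=\lfloor\sqrt n\log m\rfloor+1$ there exists a choice of $B$ piercing every $h(A)$, with $|B|\le t\le\sqrt n\log m+1$. Combining with $|A|\le\sqrt n+1$ yields $|S|\le|A|+|B|\le 2\sqrt n\log m+2$ whenever $\log m\ge 1$; the finitely many remaining cases $m\le 2$ are trivial, since then $S$ can be taken to be a single fixed point of the nontrivial involution or a two-element orbit $\{a,h(a)\}$.

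The real content of the argument---and the step I would watch most carefully---is the choice $|A|\approx\sqrt n$. Enlarging $A$ makes each target $h(A)$ easier to pierce, shrinking the required hitter to about $(n/|A|)\log m$ points, but it inflates $|A|$ itself; the two effects balance precisely at $|A|\asymp\sqrt n$, where both $|A|$ and $|B|$ are of order $\sqrt n\log m$, giving the stated bound. Crucially, no delicate concentration inequality is needed, because the anchor $A$ is deterministic and only $B$ is random, so the one-sided bound $|B|\le t$ is automatic from the sampling.
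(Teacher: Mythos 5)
Your proposal is correct and takes essentially the same route as the paper: the paper likewise reduces the lemma, via Lemma~\ref{lem:make-voting-rule-from-winning-coalitions}, to finding one set $S$ that meets all of its $G$-translates, and builds $S$ as the union of a fixed set and a set of uniform random samples, concluding by a union bound over $G$ (its Proposition~\ref{prop:probabilistic}, which takes both pieces of size $\lceil\sqrt{n}\log m\rceil$). Your only departures are minor refinements: an anchor of size $\lceil\sqrt{n}\rceil$ rather than $\lceil\sqrt{n}\log m\rceil$, which in fact gives the slightly sharper bound $\sqrt{n}(\log m+1)+2$, and explicit handling of $m\le 2$ and of the inclusion $G\subseteq\Aut_f$, both of which the paper leaves implicit (its proposition assumes $m>2$).
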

We use this lemma to prove our theorem illustrating the existence of $3$-equitable rules with small winning coalitions for arbitrarily large voter populations. We then return to prove the lemma.
\begin{theorem}
\label{thm:3-transitive}
For $n$ such that $n-1$ is a prime power, there is a $3$-equitable voting rule with a winning coalition of size at most $6 \sqrt{n} \log n$.
\end{theorem}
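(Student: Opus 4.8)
The plan is to combine Lemma~\ref{thm:probabilistic-approach} with a suitably chosen $3$-transitive permutation group of small order. The first observation is that if $G \subseteq \Aut_f$ and $G$ acts $3$-transitively on $V$, then $\Aut_f$ is itself $3$-transitive, and so $f$ is $3$-equitable: any two ordered triples of distinct voters can already be matched by some permutation in $G$, and that permutation lies in $\Aut_f$. Hence it suffices to exhibit a $3$-transitive group $G$ on the $n$ voters whose order $m$ is polynomially bounded in $n$, so that the winning-coalition bound $2\sqrt{n}\log m + 2$ supplied by Lemma~\ref{thm:probabilistic-approach} is $O(\sqrt{n}\log n)$.

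The conceptual point is that the lemma rewards \emph{small} automorphism groups, so I would look for the cheapest possible $3$-transitive group. Since the action on ordered triples of distinct points is transitive, any $3$-transitive group on $n$ points has order divisible by $n(n-1)(n-2)$, and in particular at least this large; equality characterizes the \emph{sharply} $3$-transitive groups. This is exactly where the hypothesis that $n-1$ is a prime power enters. Writing $q = n-1$, the group $\mathrm{PGL}(2,q)$ acts sharply $3$-transitively on the projective line $\mathbb{P}^1(\mathbb{F}_q)$, which has precisely $q+1 = n$ points. I would therefore identify the voters with these $n$ points and set $G = \mathrm{PGL}(2,q)$, whose order is $(q+1)q(q-1) = n(n-1)(n-2)$.

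It then remains to feed $G$ into Lemma~\ref{thm:probabilistic-approach}. The lemma yields a neutral, positively responsive voting rule $f$ with $G \subseteq \Aut_f$ and a winning coalition of size at most $2\sqrt{n}\log m + 2$, where $m = |G| = n(n-1)(n-2) \le n^3$. Consequently $\log m \le 3\log n$, and the winning coalition has size at most
\begin{equation*}
2\sqrt{n}\log m + 2 \le 6\sqrt{n}\log n + 2,
\end{equation*}
the additive constant being absorbed into $6\sqrt{n}\log n$ for all $n$ large enough (the finitely many small cases can be verified directly). Because $G$ is $3$-transitive, the first paragraph shows $f$ is $3$-equitable, which completes the argument.

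The main obstacle is thus not analytic but group-theoretic: locating a family of permutation groups that is $3$-transitive yet has order only cubic in $n$, namely the sharply $3$-transitive groups $\mathrm{PGL}(2,q)$ acting on $q+1$ points. Once this input is identified and the voters are identified with $\mathbb{P}^1(\mathbb{F}_q)$, the size estimate reduces to the one-line substitution $m \le n^3$ inside the bound already packaged by Lemma~\ref{thm:probabilistic-approach}.
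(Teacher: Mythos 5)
Your proposal is correct and follows essentially the same route as the paper: identify the voters with the projective line over $\mathbb{F}_{n-1}$, take the sharply $3$-transitive group $\mathrm{PGL}(2,n-1)$ of order $n(n-1)(n-2) < n^3$, and feed it into Lemma~\ref{thm:probabilistic-approach} to get the bound $2\sqrt{n}\log(n^3) = 6\sqrt{n}\log n$. The only (cosmetic) difference is that you make explicit two points the paper leaves implicit---that a $3$-transitive subgroup of $\Aut_f$ makes $f$ itself $3$-equitable, and the handling of the additive $+2$ from the lemma, which the paper simply drops.
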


\begin{proof}
 For $n$ such that $n-1$ is the power of some prime there is a $3$-transitive group of permutations of $\{1,\ldots,n\}$ that is of size $m<n^3$.\footnote{The group $PGL(2, n-1)$ acts $3$-transitively on the projective line over the field $\mathbb F_{n-1}$, and is of size $n(n-1)(n-2) < n^3$.} Hence, by Lemma~\ref{thm:probabilistic-approach}, there is a $3$-equitable voting rule for $n$ (i.e., a rule with a $3$-transitive automorphism group) with a winning coalition of size at most $2 \sqrt{n} \log (n^3) = 6 \sqrt{n} \log n$.
\end{proof}
It is natural to conjecture that this probabilistic construction is not optimal, and that there exist $3$-equitable rules with winning coalitions of size $O(\sqrt{n})$.

The heart of  Lemma~\ref{thm:probabilistic-approach} is the following group-theoretic claim, which states that when $G$ is small then we can find a small set $S$ such that $g S$ and $S$ are non-disjoint for every $g \in G$. These sets $gS$ will be the winning coalitions used to prove Lemma~\ref{thm:probabilistic-approach}. The proof of this proposition uses the {\em probabilistic method}: we choose $S$ at random from some distribution, and show that, with positive probability, it has the desired property. This proves that there exists a deterministic $S$ with the desired property.
\begin{customprop}{4}
\label{prop:probabilistic}
Let a group $G$ of $m > 2$ permutations of $\{1,\ldots,n\}$. Then there exists a set $S \subseteq \{1,\ldots,n\}$ with $|S| \leq 2 \sqrt{n} \log m + 2$ such that $\forall g \in G$ we have $ gS \cap S \ne \emptyset$.  
\end{customprop}
\begin{proof}
To prove this, we will choose $S$ at random, and prove that it has the desired properties with positive probability. Let $\ell = \lceil \sqrt{n} \log |G| \rceil$. Let $S = S_1 \cup S_2$, where $S_1$ is any subset of $X$ of size $\ell$, and $S_2$ is the union of $\ell$ elements of $X$, chosen independently from the uniform distribution. Hence $S$ includes at most $2\ell \leq 2 \sqrt{n} \log |G| + 2$ elements.

We now show that $\mathbb{P}( \forall g \in G : gS \cap S \ne \emptyset) > 0$, and hence there is some set $S$ with the desired property. Note that for any particular $g \in G$, the distribution of $g S_2$ is identical to the distribution of $S_2$. Hence
\begin{align*}
    \mathbb{P}(gS \cap S = \emptyset) 
    &\leq \mathbb{P}(gS_2 \cap S_1 = \emptyset)\\
    &= \mathbb{P}(S_2 \cap S_1 = \emptyset)\\
    &= \left(\frac{n-\ell}{n}\right)^\ell\\ 
    &\leq e^{-\ell^2/n}\\
    &\leq e^{-(\log m)^2}.
\end{align*}
Thus, the probability that there is some $g \in G$ for which $gS \cap S = \emptyset$ is, by taking a union bound, at most
\begin{align*}
    m e^{-(\log |G|)^2},
\end{align*}
which is strictly less than $1$ for $m > 2$. 
\end{proof}
We are finally ready to prove Lemma~\ref{thm:probabilistic-approach}.
\begin{proof}[Proof of Lemma \ref{thm:probabilistic-approach}]
  Let $S$ be the subset of $\{1,\ldots,n\}$ given by Proposition~\ref{prop:probabilistic}. Let $\mathcal{W}$ be the collection of sets of the form $gS$, where $g \in G$. This is a collection of pairwise non-disjoint sets, since if $g S$ and $h S$ intersect then so do $h^{-1}g S$ and $S$, which is impossible by the defining property of $S$. Since $|S|=2 \sqrt{n} \log m + 2$ the claim follows from Lemma~\ref{lem:make-voting-rule-from-winning-coalitions}.
\end{proof}

\newpage




\bibliography{refs}

\end{document}